\newcounter{thm}
\newcounter{ex}
\newcounter{re}
\newtheorem{Theorem}[thm]{Theorem}
\newtheorem{Lemma}[thm]{Lemma}
\newtheorem{Example}[ex]{Example}
\newtheorem{Remark}[re]{Remark}
\newtheorem{Corollary}[thm]{Corollary}
\def\hf{\frac{1}{2}}
\def\th{\frac{3}{2}}
\def\g{{\mathfrak g}_{\ell}}
\def\h{{\mathfrak h}_{\ell}}
\def\del#1{\partial_{#1}}
\def\hP#1{\hat{P}^{(#1)}}
\def\tP#1{\tilde{P}^{(#1)}}
\def\hD{\hat{D}}
\def\hC{\hat{C}}
\def\tC{\tilde{C}}
\def\hM{\hat{M}}
\def\I#1{{\cal I}_{#1}}
\begin{document}


\begin{center}

{\Large Centrally Extended Conformal Galilei Algebras  and Invariant Nonlinear PDEs}\\
~~\\

\bigskip
{\large Naruhiko Aizawa and Tadanori Kato}\\
~~\\

Department of Mathematics and Information Sciences, \\
Graduate School of Science, 
Osaka Prefecture University, \\
Nakamozu Campus, Sakai, Osaka 599-8531, Japan 

\end{center}

\bigskip
\begin{abstract}
We construct, for any given $ \ell = \hf + {\mathbb N}_0, $ the second-order \textit{nonlinear} partial differential equations 
(PDEs) which are invariant under the transformations generated by the centrally extended conformal Galilei algebras. 
The generators are obtained by a coset construction and the PDEs are constructed by standard Lie symmetry technique. 
It is observed that the invariant PDEs have significant difference for $ \ell > \th. $ 
\end{abstract}



\section{Introduction}

 The purpose of the present work is to construct  partial differential equations (PDEs) which are invariant under the 
transformations generated by the conformal Galilei algebra (CGA). 
We consider a particular realization, which is given in \cite{Aizawa:2013vma}, 
of CGAs with the central extension for the parameters $ (d, \ell)  = (1, \hf + {\mathbb N}_0),  $  
where $ {\mathbb N}_0 $ denotes the set of non-negative integers. 
We also restrict ourselves to the second-order PDEs for computational simplicity. 
Our main focus is on nonlinear PDEs since linear ones have already been discussed in the literatures 
\cite{aizawa2002intertwining,aizawa2008intertwining,Aizawa:2013vma,Aizawa:2014hva}. 
CGA is a Lie algebra which generates conformal transformations in $d+1$ dimensional \textit{nonrelativistic} spacetime 
\cite{negro1997nonrelativistic,negro1997nonrelativistic2,havas1978conformal,henkel1997local}. 
Even in the fixed dimension of spacetime one has infinite number of inequivalent conformal algebras. 
For a fixed value of $d$ each inequivalent CGA is labelled by a parameter $ \ell $ taking the spin value, i.e., 
$ \ell = \hf, 1, \th, 2, \dots. $ Each CGA has an Abelian ideal (namely, CGA is a non-semisimple Lie algebra) so that it would be deformed. 
Indeed, it has a central extension depending the value of the parameters. 
More precisely, there exist two different types of central extensions. 
One of them exists for any values of $d$ and half-integer $ \ell, $ 
another type of extension exists for $ d = 2 $ and integer $ \ell. $ Simple explanation of this fact is found in \cite{martelli2010comments}.

  It has been observed that CGAs for $ \ell = \hf $ and $ \ell = 1 $  play  important roles in various kind of problems in physics and mathematics. 
The simplest $ \ell = \hf $ member of  CGAs is called the Schr\"odinger algebra which was originally discussed by Sophus Lie and Jacobi 
in 19th century \cite{lie1970theorie, jacobi1866vorlesungen} and 
reintroduced later by many physicists \cite{Niederer:1972zz,Niederer:1973tz,Niederer:1974ba,hagen1972scale,jackiw2008introducing,burdet1972many}. 
Recent renewed interest in CGAs is mainly due to the AdS/CFT correspondence. 
The Schr\"odinger algebra and $ \ell=1 $ member of CGA were used to formulate nonrelativistic analogues of 
AdS/CFT correspondence \cite{son2008toward,balasubramanian2008gravity,martelli2010comments}. 
One may find a nice review of various applications of the Schr\"odinger algebras in \cite{unterberger2011schrodinger} 
and see \cite{aizawa2012highest} for more references on the Schr\"odinger algebras and $\ell=1 $ CGAs.  
Physical applications of $ \ell=2$ CGA is found in \cite{henkel2002phenomenology}. 

 Now one may ask a question whether the CGAs with $ \ell > 1 $ are relevant structures to physical or mathematical problems. 
To answer this question one should find classical or quantum dynamical systems relating to CGAs and develop representation theory of CGAs 
(see \cite{aizawa2011irreducible,aizawa2012highest,lu2014simple} for classification of irreducible modules over $d=1, 2$ CGAs ).  
This is the motivation of the present work. We choose a particular differential realization of CGAs then look for PDEs 
invariant under the transformation generated by the realization. Investigation along this line for the Schr\"odinger algebras 
is found in \cite{fushchich1985galilean,fushchich1989galilean,rideau1993evolution,aizawa2002intertwining,aizawa2008intertwining} and for $\ell=1$ CGAs in 
\cite{cherniha2010exotic} and for related algebraic structure in \cite{fushchych1995galilei,cherniha2004non}. 
For higher values of $ \ell $ use of the representation theory such as Verma modules, singular vectors allows us to  
derive \textit{linear} PDEs invariant under CGAs \cite{Aizawa:2013vma,Aizawa:2014hva}.  
More physical applications of CGAs with higher value $\ell$  are found in the literatures 
\cite{duval2009non,duval2011conformal,gomis2012schrodinger,galajinsky2011remarks,galajinsky2013dynamical,galajinsky2013dynamical2,andrzejewski2014conformal,galajinsky2015dynamical,andrzejewski2013dynamical,andrzejewski2013nonrelativistic,andrzejewski2013dynamical2,andrzejewski2013unitary,andrzejewski2014conformal2,andrzejewski2012nonrelativistic,Aizawa:2014uma}. 

 The paper is organized as follows. In the next section the definition of  CGA for  $ (d, \ell)  = (1, \hf + {\mathbb N}_0) $ 
and its differential realization are given. 
Then symmetry of PDEs under a subset of the generators is considered. It is shown that there is a significant distinction of the form of 
invariant PDEs for $ \ell  > \th. $  Invariant PDEs for $ \ell = \th $ CGA are obtained in \S \ref{Sec:TH} 
For $ \ell \geq \frac{5}{2} $ we first derive PDEs invariant under a subalgebra of the CGA in \S \ref{Sec:h}, 
then derive invariant PDEs under full CGA in \S \ref{Sec:g}. \S \ref{Sec:CR} is devoted to concluding remarks.  


\section{CGAs and preliminary consideration}

The CGA for $ d=1 $ and any half-integer $\ell$ consists of $ sl(2,{\mathbb R}) \simeq so(2,1) = \langle\, H, D, C \, \rangle $ and 
$ \ell+1/2$ copies of the Heisenberg algebra $ \langle\, P^{(n)}, M \,\rangle_{n=1,2,\dots, 2\ell+1}.$ 
Their nonvanishing commutators are given by
\begin{eqnarray}
  & & [D, H] = -2H, \quad [D, C] = 2C, \quad [H, C] = D, 
  \nonumber  \\
  & & [H, P^{(n)}] = (n-1)P^{(n-1)}, \quad [D, P^{(n)}]=2(n-1-\ell) P^{(n)},
  \label{CGADefCom} \\
  & & [C, P^{(n)}]=(n-1-2\ell) P^{(n+1)}, \quad
      [P^{(m)}, P^{(n)}] = - \delta_{m+n,2\ell+2} I_{m-1} M,
  \nonumber
\end{eqnarray}
where the structure constant $I_m$ is taken to be $ I_m= (-1)^{m+\ell+\hf} (2\ell-m)! m! $ and $ M$ is the centre of the algebra. 
We denote this algebra by $\g$. 
The subset $ \langle\, P^{(n)}, M, H \,\rangle_{n=1,2,\dots, 2\ell+1}$ forms a subalgebra of $ \g $ and  
we denote it by $\h. $

  We employ the following realization of $ \g$ on the space of functions of the variables $ t = x_0, x_1, \dots, x_{\ell+\hf} $ and $U$ \cite{Aizawa:2013vma}:
\begin{eqnarray}
  & & M = U \del{U}, \qquad D = 2 t \del{t} + \sum_{k=1}^{\ell+\hf} 2(\ell+1-k) x_k \del{x_k}, 
  \qquad H = \del{t}, 
  \nonumber \\
  & & C = t^2 \del{t} + \sum_{k=1}^{\ell+\hf} 2 (\ell+1-k) t x_k \del{x_k} + \sum_{k=1}^{\ell-\hf} (2\ell+1-k) x_k \del{x_{k+1}}
        - \hf \Big( \big(\ell+\hf \big) ! \Big)^2 x_{\ell+\hf}^2 U \del{U},
  \nonumber \\
  & & P^{(n)} = \sum_{k=1}^n \binom{n-1}{k-1} t^{n-k} \del{x_k}, \quad 
      1 \leq n \leq \ell + \hf,
  \label{CGAgenerators} \\
  & & P^{(n)} = \sum_{k=1}^{\ell+\hf} \binom{n-1}{k-1} t^{n-k} \del{x_k} 
      - \sum_{k=\ell+\frac{3}{2}}^n \binom{n-1}{k-1} I_{k-1} t^{n-k} x_{2\ell+2-k} U \del{U}, 
      \quad \ell+\frac{3}{2} \leq n \leq 2\ell+1 \nonumber
\end{eqnarray}
where $ \binom{n}{k} $ is the binomial coefficient and $ I_{k-1} $ is the structure constant appearing in (\ref{CGADefCom}). 
This is in fact a realization of $ \g $ on the Borel subgroup of the conformal Galilei group generated by  $\g$ 
(we made some slight changes from \cite{Aizawa:2013vma}).   
Let us introduce the sets of indices for later convenience:
\begin{equation}
  {\cal I}_{\mu} = \bigl\{\; \mu, \mu+1, \dots, \ell+\hf \; \bigr\}, \qquad \mu = 0, 1, 2, \dots
  \label{IndSets}
\end{equation}
Now we take $ x_{\mu}, \, \mu \in \I{0} $ as independent variables and $ U$ as dependent variable : $ U = U(x_{\mu}). $ 
Our aim is to find second order PDEs which are invariant under the point  transformations generated by (\ref{CGAgenerators}) for 
$ \ell > 1/2 $ ($\ell = 1/2$ corresponds to Schr\"odinger algebra). 
Such a PDE is denoted by
\begin{equation}
   F(x_{\mu}, U, U_{\mu}, U_{\mu\nu}) = 0, \quad 
   U_{\mu} = \frac{\partial U}{\partial x_{\mu}}, \quad U_{\mu\nu} = \frac{\partial^2 U}{\partial x_{\mu} x_{\nu}}.
   \label{InvPDE1}
\end{equation}
We use the shorthand notation throughout this article. The left hand side of (\ref{InvPDE1}) means that 
$F$ is a function of all independent variables $ x_{\mu}, \, \mu \in \I{0}, $ dependent variable $U$ 
and all first and second order derivatives of $U.$  
As found in the standard textbooks (e.g., \cite{olver2000applications, bluman1989symmetries,stephani1989differential}) 
the symmetry condition is expressed in terms of the prolonged generators:
\begin{equation}
   \hat{X} F = 0, \ (\bmod F = 0), \label{GenSymCond}
\end{equation}
where $ \hat{X} $ is the prolongation of the symmetry generator $X$ up to second order:
\begin{equation}
  \hat{X} = X   + \sum_{\mu=0}^{\ell+\hf} \rho^{\mu} \frac{\partial }{\partial U_{\mu}} 
    + \sum_{\mu \leq \nu} \sigma^{\mu\nu} \frac{\partial}{\partial U_{\mu\nu}},
    \qquad
    X = \sum_{\mu=0}^{\ell+\hf} \xi^{\mu} \frac{\partial}{\partial x_{\mu}} + \eta \frac{\partial }{\partial U}.
    \label{ext-generator}
\end{equation}
The quantities $ \rho^{\mu}, \sigma^{\mu\nu} $ are defined by
\begin{eqnarray}
  \rho^{\mu} &=& \eta_{\mu} + \eta_U U_{\mu} - \sum_{\nu=0}^{\ell+\hf} U_{\nu} (\xi^{\nu}_{\mu} + \xi^{\nu}_U U_{\mu}),
     \label{def-rho} \\
  \sigma^{\mu\nu} &=& \eta_{\mu\nu} + \eta_{\mu U} U_{\nu} + \eta_{\nu U} U_{\mu} + \eta_U U_{\mu\nu} + \eta_{UU} U_{\mu} U_{\nu}
     \nonumber \\
     &-& \sum_{\tau=0}^{\ell+\hf} \xi^{\tau}_{\mu\nu} U_{\tau} 
     - \sum_{\tau=0}^{\ell+\hf} (\xi^{\tau}_{\mu} U_{\nu \tau} + \xi^{\tau}_{\nu} U_{\mu \tau}) 
     - \sum_{\tau=0}^{\ell+\hf} \xi^{\tau}_U (U_{\tau} U_{\mu\nu} + U_{\mu} U_{\nu \tau} + U_{\nu} U_{\mu \tau})
     \nonumber \\
     &-& \sum_{\tau=0}^{\ell+\hf} (\xi^{\tau}_{\mu U} U_{\nu} + \xi^{\tau}_{\nu U} U_{\mu} + \xi^{\tau}_{UU} U_{\mu} U_{\nu}) U_{\tau}. 
     \label{def-sigma}
\end{eqnarray}
In this section we consider the symmetry condition (\ref{GenSymCond}) for $ M, H $ and $ P^{(n)}$ with $ n= 1, 2, \dots, 2\ell+1. $

\begin{Lemma} \label{Lem:HPM}
\begin{enumerate}
 \renewcommand{\labelenumi}{(\roman{enumi})}
\item 
Equation (\ref{InvPDE1}) is invariant under $ H, P^{(1)} $ and $M$ if it has the form
  \begin{equation}
     F\left(x_a, \frac{U_{\mu}}{U}, \frac{U_{\mu\nu}}{U}\right) = 0, \quad a  \in \I{2}, \quad \mu, \nu \in \I{0}
     \label{InvEqForm1}
  \end{equation}
\item 
 For $ \ell > \frac{3}{2},$ a necessary condition for the symmetry of the equation (\ref{InvEqForm1})  under $ P^{(n)} $ with 
$ n \in \I{2} $ is that the function $ F$ is independent of $ U_{0m}, \; m \in \I{3}.$
\end{enumerate}
\end{Lemma}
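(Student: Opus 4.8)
The plan is to impose the second‑order symmetry condition (\ref{GenSymCond}) for each $\hP{n}$ with $n\in\I{2}$ on a function $F$ that is already of the form (\ref{InvEqForm1}), and to read off the constraint on the $U_{0m}$‑dependence directly from the prolongation. First I would note that every such $n$ lies in the range $1\le n\le \ell+\hf$, so $P^{(n)}$ is given by the first line of (\ref{CGAgenerators}); hence $\eta=0$ and the coefficients $\xi^{k}=\binom{n-1}{k-1}t^{n-k}$ depend on $t=x_0$ alone. Feeding this into (\ref{def-rho})–(\ref{def-sigma}) collapses the prolongation: since $\xi^{k}_{U}=0$ and $\xi^{k}$ is $t$‑dependent only, one gets $\rho^{\mu}=0$ for $\mu\neq0$ and $\sigma^{\mu\nu}=0$ whenever $\mu,\nu\neq0$, while the surviving pieces are $\rho^{0}=-\sum_{\nu}\xi^{\nu}_{0}U_{\nu}$, $\sigma^{0\nu}=-\sum_{\tau}\xi^{\tau}_{0}U_{\nu\tau}$ and $\sigma^{00}=-\sum_{\tau}\xi^{\tau}_{00}U_{\tau}-2\sum_{\tau}\xi^{\tau}_{0}U_{0\tau}$. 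The essential observation is that $\sigma^{0\nu}$ is the \emph{only} channel through which the purely spatial second derivatives $U_{\nu\tau}$ ($\nu,\tau\ge1$) enter $\hP{n}F$.

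Next I would write $\hP{n}F=0$ for $F=F(x_a,W_{\mu},W_{\mu\nu})$, with the shorthand $W_{\mu}=U_{\mu}/U$, $W_{\mu\nu}=U_{\mu\nu}/U$ and subscripts on $F$ denoting partials with respect to the indicated argument. Because part (i) already forces $F$ to be independent of $x_0=t$, whereas all of the prolongation coefficients above are polynomials in $t$, the condition splits into one independent equation for each power of $t$. Extracting the coefficient of $t^{\,n-\tau_0-1}$ ($\tau_0=1,\dots,n-1$) and dividing by the common factor $\binom{n-1}{\tau_0-1}(n-\tau_0)$ yields a determining equation of the form
\[
  \tfrac{1}{\tau_0}F_{x_{\tau_0+1}}
  = W_{\tau_0}F_{W_0}+\bigl((\tau_0-1)W_{\tau_0-1}+2W_{0\tau_0}\bigr)F_{W_{00}}
    +\sum_{\nu\ge1}W_{\nu\tau_0}\,F_{W_{0\nu}} .
\]
Running over all $n\in\I{2}$ and all admissible powers of $t$ reproduces this family for $\tau_0=1,\dots,\ell-\hf$, so the largest generator $n=\ell+\hf$ already carries the full information.

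The heart of the argument is then to exploit that, in each such equation, the purely spatial block $W_{\nu\tau_0}$ occurs only in the last sum, paired precisely with $F_{W_{0\nu}}$. I would fix $m\in\I{3}$ and isolate the monomial $W_{m\tau_0}$: choosing an index $\tau_0\neq m$ with $1\le\tau_0\le n-1$, the contribution of $W_{m\tau_0}$ comes from the $\nu=m$ term alone — its companion contribution $\xi^{m}_{0}F_{W_{0\tau_0}}$ carries the power $t^{\,n-m-1}$ and is thereby separated off into the equation with $\tau_0$ replaced by $m$ — while no $x_a$‑, $W_0$‑ or $W_{00}$‑term can produce a spatial second derivative bearing the index $m\ge3$. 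Setting the resulting coefficient to zero gives $F_{W_{0m}}=0$, i.e.\ $F$ is independent of $U_{0m}$. I would then check that for $m=1,2$ the same isolation fails, the corresponding mixed derivatives being tied to the $\sigma^{00}$‑block and to the $x_2,x_3$‑derivative terms and hence absorbable; this is what singles out the threshold $m\ge3$ (equivalently $\ell>\th$).

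The step I expect to be the main obstacle is exactly this last isolation. Since $F$ is a priori allowed to depend on the spatial second derivatives $W_{\nu\tau}$ themselves, ``the coefficient of $W_{m\tau_0}$'' is not literally a coefficient; one must instead differentiate the determining equation with respect to the independent jet variable $W_{m\tau_0}$ (respecting the symmetry $W_{m\tau_0}=W_{\tau_0 m}$), then separate the resulting identity by powers of $t$ and across the different $n$, and finally argue that the working is performed modulo $F=0$, so that any multiplier term $\lambda F$ does not reintroduce the offending derivative. Carrying this out cleanly, and verifying that the companion terms $F_{W_{0\tau_0}}$ together with the $\sigma^{00}$‑channel genuinely decouple for $m\ge3$ but not for $m\le2$, is the delicate point on which the proof turns.
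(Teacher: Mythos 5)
Your setup is sound and matches the paper's: with $\eta=0$ and $\xi^{k}$ depending on $t$ alone, the prolongation collapses to exactly the $\rho^{0},\sigma^{00},\sigma^{0\nu}$ channels you list (this is (\ref{PnExt3})), and since $F$ is $t$-independent the condition $\hP{n}F=0$ splits by powers of $t$. Your determining equation at order $t^{n-\tau_0-1}$ is also essentially correct, except that the last sum runs over $\nu=1,\dots,n$: the upper limit depends on $n$, and that dependence is the whole point. Because of it, your claim that the largest generator $n=\ell+\hf$ "already carries the full information" is not right — from a single $n$ you get, for each $\tau_0$, one equation in which $F_{W_{0m}}$ is entangled with $F_{x_{\tau_0+1}}$, $F_{W_0}$, $F_{W_{00}}$ and all the other $F_{W_{0\nu}}$, and since $F$ is still permitted to depend on $W_{m\tau_0}$ you cannot simply extract its "coefficient".

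The decisive step is exactly the one you flag as the main obstacle and do not carry out, and the method you sketch for it fails: differentiating the determining equation with respect to the jet variable $W_{m\tau_0}$ produces $F_{W_{0m}}$ \emph{plus} a string of second derivatives $F_{W_0 W_{m\tau_0}}$, $F_{W_{00}W_{m\tau_0}}$, $F_{W_{0\nu}W_{m\tau_0}}$, none of which you may discard. What closes the argument (and is what the paper does) is the cross-$n$ comparison that your displayed equation hides: the coefficient of $t^{n-a}$ in $\hP{n}$ is $\binom{n-1}{n-a}\bigl[\tP{a}-\sum_{k=a+1}^{n}(a-1)U_{a-1\,k}\del{U_{0k}}\bigr]$, while the $t^{0}$ part of $\hP{a}F=0$ already gives $\tP{a}F=0$; subtracting leaves the pure first-order relation $\sum_{k=a+1}^{n}U_{a-1\,k}\del{U_{0k}}F=0$, and iterating upward in $n$ (for $n=3$ this is $U_{13}\del{U_{03}}F=0$, hence $\del{U_{03}}F=0$; then $n=4$ gives $\del{U_{04}}F=0$; and so on up to $n=\ell+\hf$) kills $\del{U_{0m}}F$ one $m\ge 3$ at a time. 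Without this, or an equivalent separation, your argument is incomplete at the point where the conclusion is actually drawn. You also never address part (i), which is part of the statement: it is immediate because $H$ and $P^{(1)}$ have no prolongation (forcing $t$- and $x_1$-independence) and $\hM$ is the Euler operator (\ref{M-ext}), whose invariants are precisely the ratios $U_{\mu}/U$ and $U_{\mu\nu}/U$.
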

\begin{proof}[Proof of Lemma \ref{Lem:HPM}]
(i) It is obvious from that the generators $ H $ and $ P^{(1)} $ have no prolongation, 
 while the prolongation of $M$ is given by 
 \begin{equation}
      \hat{M} = U\del{U} + \sum_{\mu=0}^{\ell+\hf} U_{\mu} \del{U_{\mu}} + \sum_{\mu \leq \nu} U_{\mu\nu} \del{U_{\mu\nu}}. 
      \label{M-ext}
 \end{equation}
(ii) 
 The lemma is proved by the formula of the prolongation of $ P^{(n)}. $ 
For $ n \in \I{2}  $ the generator $P^{(n)}$ is given by 
\[
  P^{(n)} = \sum_{k=1}^n \binom{n-1}{k-1} t^{n-k} \del{x_k}
\]
and its prolongation yields 
 \begin{equation}
    \hP{n} = \sum_{a=2}^n \begin{pmatrix} n-1 \\ n-a \end{pmatrix} 
         t^{n-a} 
         \left[  
            \tP{a} - \sum_{k=a+1}^n (a-1) U_{a-1\,k} \del{U_{0k}}
         \right],
    \label{PnExt1}
 \end{equation}
 where
 \begin{equation}
    \tP{n} = \left. \hP{n} \right|_{t=0} =  \del{x_n} - (n-1) 
    \left\{ 
      U_{n-1} \del{U_0} + \bigl( (n-2) U_{n-2} + 2U_{0\,n-1} \bigr)\del{U_{00}} + \sum_{k=1}^n U_{n-1\,k} \del{U_{0k}}
    \right\},
    \label{PnExt2}
 \end{equation}
 and the terms containing $ \del{x_1} $ are omitted. 
 We give the explicit expressions for small values of $n$ which will be helpful to see the structure of equation (\ref{PnExt1}):
 \begin{eqnarray*}
   \hP{2} &=& \tP{2}, \\
   \hP{3} &=& \tP{3} + 2t (\tP{2}-U_{13} \del{U_{03}}), \\
   \hP{4} &=& \tP{4} + 3t (\tP{3} - 2 U_{24} \del{U_{04}}) + 3t^2 (\tP{2} - U_{13} \del{U_{03}} - U_{14} \del{U_{04}}).
 \end{eqnarray*}
 Since $ \tP{n} $ is independent of $t,$ each symmetry condition $ \hP{n} F = 0 $ decouples into some independent equations. 
 For example, $ \hP{4} F = 0 $ decouples into the following equations:
 \[
   \tP{4} F = 0, \qquad (\tP{3} - 2 U_{24} \del{U_{04}})F = 0, \qquad (\tP{2} - U_{13} \del{U_{03}} - U_{14} \del{U_{04}}) F = 0
 \] 
 The condition $ \hP{2} F = 0 $ is equivalent to the condition  $ \tP{2} F = 0. $ 
It follows that the  condition $ \hP{3} F =0 $ 
yields two independent conditions $ \tP{3} F = 0 $ and $ U_{13} \del{U_{03}} F = 0. $ 
The second condition means that $F$ is independent of $ U_{03}. $ 
Repeating this for $ \hP{n} F = 0 $ for $ n = 4, 5, \dots, \ell + \hf $  one may prove the lemma.

   Now we show the formula (\ref{PnExt1}).  
   Set $ \xi^k(n) = \begin{pmatrix} n-1 \\ k-1 \end{pmatrix} t^{n-k} $ then 
   $ \displaystyle P^{(n)} = \sum_{k=2}^n \xi^k(n) \del{x_k} $ (recall that we omit $ \del{x_1} $). 
   By the equations (\ref{ext-generator}) - (\ref{def-sigma}) we have
   \begin{equation}
     \hP{n} = P^{(n)} - \sum_{k=1}^n \bigl\{ \xi^k_0(n) U_k \del{U_0} + ( \xi^k_{00}(n) U_k + 2\xi^k_0(n) U_{0k}) \del{U_{00}} 
       + \sum_{m=1}^n \xi^m_0(n) U_{mk} \del{U_{mk}} \bigr\}.
     \label{PnExt3}
   \end{equation}
   Thus the maximal degree of $t$ in $\hP{n} $ is $n-2.$ The following relation is easily verified:
   \begin{equation}
      \frac{\partial^a}{\partial t^a} \xi^k(n) = 
      \left\{
         \begin{array}{lcl}
            \frac{(n-1)!}{(n-a-1)!} \, \xi^k(n-a), & & (1 \leq k \leq n-a) \\[9pt]
            0, & & (n-a < k)
         \end{array}
      \right.
      \label{DelXi}
   \end{equation}
   Using this one may calculate the higher order derivatives of (\ref{PnExt3}): 
   \[
      \frac{\partial^a}{\partial t^a} \hP{n} = \frac{(n-1)!}{(n-a-1)!} 
      \bigl\{
           P^{(n-a)} - \sum_{k=n-a+1}^n \sum_{m=1}^{n-a} \xi^m_0(n-a) U_{mk} \del{U_{0k}}.
      \bigr\}
   \]
   It follows that
   \begin{eqnarray*}
     \hP{n} &=& \sum_{a=0}^{n-2} \frac{1}{a!} \left( \frac{\partial^a}{\partial t^a} \hP{n} \right)_{t=0} t^a 
     \\
     &=& \sum_{a=0}^{n-2} \begin{pmatrix} n-1 \\a \end{pmatrix} 
     t^a \bigl[ \tP{n-a} - \sum_{k=n-a+1}^n (n-a-1) U_{n-a-1\,k} \del{U_{0k}} \bigr].
   \end{eqnarray*}
   By replacing $ n-a $ with $ a $ we obtain the formula (\ref{PnExt1}). The formula (\ref{PnExt2}) is readily obtained by setting 
   $ t = 0 $ in the equation (\ref{PnExt3}). 
\end{proof}
\begin{Remark}
  By Lemma \ref{Lem:HPM} the symmetry condition for $ M, H, P^{(n)} $ with $ n \in \I{1} $ is summarized as
  \begin{equation}
        \tP{n} F\left( x_a, \frac{U_{\mu}}{U}, \frac{U_{00}}{U}, \frac{U_{01}}{U}, \frac{U_{02}}{U}, \frac{U_{km}}{U},  \right) = 0, \quad 
        a \in {\cal I}_2,  \ \mu \in \I{0}, \ k, m  \in {\cal I}_1
        \label{PCond1}
  \end{equation}
  where $ \tP{n} $ is given by (\ref{PnExt2}). 
\end{Remark}

  The condition (\ref{PCond1}) implies that $ F $ is independent of $ U_{00} $ if $ \ell \geq 7/2, $ 
since $ \tP{n} $ has the term $ U_{0k} \del{U_{00}} $ with $ k \geq 3. $ 
In fact one can make a stronger statement by looking at the symmetry conditions for $ P^{(n)} $ with 
$ \ell+\th \leq n \leq 2\ell+1. $ 

\begin{Lemma} \label{Lem:U00}
  $ F $ given in (\ref{PCond1}) is independent of $ U_{00} $ if $ \ell \geq 5/2.$ 
\end{Lemma}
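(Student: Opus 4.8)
The plan is to exploit the one remaining family of generators that has not yet been used, namely the $P^{(n)}$ with $\ell+\th \leq n \leq 2\ell+1$, and to show that a \emph{single} member of this family already forces $\del{U_{00}} F = 0$ for every $\ell \geq 5/2$. Recall that the Remark preceding the Lemma disposes of $\ell \geq 7/2$ for free: the $\del{U_{00}}$-coefficient of $\tP{n}$ contains $U_{0\,n-1}$, and taking $n=4$ (available once $\ell+\hf \geq 4$) produces the free variable $U_{03}$, which is absent from $F$ by Lemma \ref{Lem:HPM}(ii). The only genuinely new case is therefore $\ell = 5/2$, but it is cleanest to give one uniform argument valid for all $\ell \geq 5/2$ by working with the lowest generator of the upper family, $X = P^{(\ell+\th)}$.

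First I would read off the components of $X$ in the notation of (\ref{ext-generator}): since $P^{(\ell+\th)}$ carries no $\del{t}$ term we have $\xi^0 = 0$, while $\xi^k = \binom{\ell+\hf}{k-1} t^{\ell+\th-k}$ for $1 \leq k \leq \ell+\hf$ and $\eta = -I_{\ell+\hf}\, x_{\ell+\hf}\, U$. The crucial observation is that each $\xi^k$ depends on $x$ only through $t=x_0$, so $\xi^\tau_\mu = 0$ whenever $\mu \neq 0$, and likewise $\xi^\tau_U = 0$. Feeding this into (\ref{def-sigma}) for $\mu=\nu=0$, the coefficient $\sigma^{00}$ reduces to $\eta_U U_{00} - \sum_\tau \xi^\tau_{00} U_\tau - 2\sum_\tau \xi^\tau_0 U_{0\tau}$, and its $\tau = \ell+\hf$ piece is $-2\,\xi^{\ell+\hf}_0 U_{0\,\ell+\hf}$. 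Since $\binom{\ell+\hf}{\ell-\hf} = \ell+\hf$ and the exponent $(\ell+\th)-(\ell+\hf)=1$, one gets $\xi^{\ell+\hf} = (\ell+\hf)t$, hence $\xi^{\ell+\hf}_0 = \ell+\hf$, so $\sigma^{00}$ contains the constant multiple $-2(\ell+\hf)\,U_{0\,\ell+\hf}$.

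The decisive step is the decoupling argument. For $\ell \geq 5/2$ we have $\ell+\hf \geq 3$, so $U_{0\,\ell+\hf}$ is \emph{not} among the arguments of $F$ in (\ref{PCond1}) (it is eliminated by Lemma \ref{Lem:HPM}(ii)), and I claim $U_{0\,\ell+\hf}$ cannot be generated anywhere else in $\hP{\ell+\th}F = 0$. Indeed, $X F$ and the $\rho^\mu$ involve only $x$, $U$ and first derivatives; and in any other $\sigma^{\mu\nu}$ the variable $U_{0\,\ell+\hf}$ could enter only through the cross-terms $-\sum_\tau(\xi^\tau_\mu U_{\nu\tau} + \xi^\tau_\nu U_{\mu\tau})$, which require $\mu=0$ or $\nu=0$; using $\xi^0 \equiv 0$ one checks that $U_{\nu\tau}=U_{0\,\ell+\hf}$ or $U_{\mu\tau}=U_{0\,\ell+\hf}$ can be matched only when $\mu=\nu=0$, i.e. only in $\sigma^{00}$. (The $\eta$-part contributes to $\sigma^{00}$ solely via $\eta_U U_{00}$ and first-order terms, so it never reproduces $U_{0\,\ell+\hf}$ and is harmless for this coefficient.) Collecting the coefficient of the free jet coordinate $U_{0\,\ell+\hf}$ in the symmetry condition therefore yields $-2(\ell+\hf)\,\del{U_{00}} F = 0$, whence $\del{U_{00}} F = 0$. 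The main obstacle is exactly this bookkeeping in the third step, proving that $U_{0\,\ell+\hf}$ sits in $\sigma^{00}$ and nowhere else; once that is secured the conclusion is immediate, and the hypothesis $\ell \geq 5/2$ enters precisely to guarantee $\ell+\hf \geq 3$ so that $U_{0\,\ell+\hf}$ is a free variable.
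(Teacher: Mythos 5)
Your proof is correct and follows essentially the same route as the paper: both isolate the term proportional to $U_{0\,\ell+\hf}\,\del{U_{00}}$ in the second-order prolongation of $P^{(\ell+\th)}$ and combine it with Lemma \ref{Lem:HPM}(ii), which removes $U_{0\,\ell+\hf}$ from the arguments of $F$ precisely when $\ell+\hf\geq 3$. The only presentational difference is that the paper first reduces $\hP{n}F=0$ to the $t$-independent conditions $\tP{n}F=0$ via the expansion $\hP{n}=\sum_{a}\binom{n-1}{a}\tP{n-a}t^{a}$, whereas you read the coefficient $-2(\ell+\hf)\,\del{U_{00}}F$ of the free jet variable $U_{0\,\ell+\hf}$ directly off the full $t$-dependent prolongation; both are valid.
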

\begin{proof}[Proof of Lemma \ref{Lem:U00}]
 We calculate the prolongation of $ P^{(n)} $ for  $ \ell+\th \leq n \leq 2\ell+1. $ 
The derivatives $ \del{t}, \del{x_1}, \del{U_{0k}} (k \in \I{3}) $ are ignored in the computation. 
Then
\begin{eqnarray}
  P^{(n)} &=& \sum_{m=2}^{\ell+\hf} \xi^m(n) \del{x_m} + \eta(n) \del{U}, 
  \label{PnExt4} \\
   \xi^m(n) &=& \begin{pmatrix} n-1 \\ m-1 \end{pmatrix} t^{n-m}, \qquad 
       \eta(n) = -\sum_{m=\ell+\th}^n \xi^m(n) I_{m-1} x_{2\ell+2-m} U.
  \label{PnExt5}
\end{eqnarray}
One may calculate derivatives of $ \eta(n)$ easily
\[
   \eta_k(n) = \frac{\partial \eta(n)}{\partial x_k} 
   = 
   \left\{
      \begin{array}{lcl}
          -\xi^{2\ell+2-k}(n) I_{2\ell+1-k} U & & 2\ell+2-n \leq k \leq \ell+\hf \\[10pt]
          0                                   & & k <  2\ell+2-n
      \end{array}
   \right.
\]
First and second order derivatives need some care:
\[
   \eta_1(n) = 
   \left\{
      \begin{array}{lcl}
         -I_{2\ell} U & & n = 2\ell+1 \\[10pt]
         0     & & \text{otherwise}
      \end{array}
   \right.
\]
\[
   \eta_2(n) = 
   \left\{
      \begin{array}{lcl}
         -2\ell t I_{2\ell-1} U &  & n = 2\ell+1 \\[10pt]
         -I_{2\ell-1} U & & n = 2\ell \\[10pt]
         0              & & \text{otherwise}
      \end{array}
   \right.
\]
Then a lengthy but straightforward computation gives the following expression for the  prolongation of $ P^{(n)}$ 
up to second order:
\begin{eqnarray}
  \hP{n} &=& \eta_U \hM + \sum_{k=2}^{\ell+\hf} \xi^k(n) \del{x_k} 
    + \bigl( \eta_0(n) - \sum_{k=1}^{\ell+\hf} \xi^k_0(n)U_k \bigr) \del{U_0}
    + \sum_{k=2\ell+2-n}^{\ell+\hf} \eta_k(n) \del{U_k}
   \nonumber \\
   &+& \bigl(
      \eta_{00}(n) + 2 \eta_{0U}(n) U_0 - \sum_{k=1}^{\ell+\hf} ( \xi^k_{00}(n) U_k + 2\xi^k_{0}(n) U_{0k}) 
   \bigr) \del{U_{00}}
   \nonumber \\
   &+& \sum_{k=1,2} \bigl(  \eta_{0U}(n) U_k - \sum_{m=1}^{\ell+\hf} \xi^m_0(n) U_{km} \bigr) \del{U_{0k}}
   - \delta_{n,2\ell} I_{2\ell-1} U_0 \del{U_{02}}
   \nonumber \\
   &-& \delta_{n,2\ell+1} \bigl( I_{2\ell} U_0 \del{U_{01}} + 2 \ell I_{2\ell-1} (U + t U_0) \del{U_{02}} \bigr)
   \nonumber \\
   &+& \sum_{k=2\ell+2-n}^{\ell+\hf} \sum_{m=1}^{\ell+\hf} \eta_{kU} U_m \del{U_{km}} 
   + \sum_{k=2\ell+2-n}^{\ell+\hf} \eta_{kU} U_k \del{U_{kk}}.
   \label{PnExt6}
\end{eqnarray}
We have already taken into account the invariance under $ \hM $ so that 
the first term of (\ref{PnExt6}) is omitted in the following computations. 
It is an easy exercise to verify that
\[
   \frac{\partial^a}{\partial t^a} \eta(n) = 
   \frac{(n-1)!}{(n-a-1)!} \times 
   \left\{
       \begin{array}{lcl}
          0 & & n-\ell-\hf \leq a \\[10pt]
           \eta(n-a) & & 0 \leq a \leq n - \ell -\th
       \end{array}
   \right.
\]
and
\[
   \frac{\partial^a}{\partial t^a} \sum_{k=1}^{\ell+\hf} \xi^k(n) 
   = 
   \frac{(n-1)!}{(n-a-1)!} \times
   \left\{
      \begin{array}{lcl}
         0 & & n \leq a \\[10pt]
         \displaystyle \sum_{k=1}^{n-a} \xi^k(n-a) & & n-\ell-\hf \leq a \leq n-1 \\[15pt]
         \displaystyle \sum_{k=1}^{\ell+\hf} \xi^k(n-a) & & 0 \leq a \leq n-\ell-\th
      \end{array}
   \right.
\]
It follows that for $ 0 \leq a \leq n-\ell-\th $
\begin{equation}
  \frac{\partial^a}{\partial t^a} \hP{n} = \frac{(n-1)!}{(n-a-1)!} \hP{n-a}. \label{PnDer1}
\end{equation}
For $ n-\ell-\hf \leq a \leq n-2 $ (i.e., $ 2 \leq n-a \leq \ell + \hf $) all the derivatives of $ \eta(n)$ vanishes and (\ref{PnExt3}) is recovered. Therefore for all values of $a$ from $0$ to $ n-2 $ the relation (\ref{PnDer1}) holds true. 
Thus we have
\[
  \hP{n}= \sum_{a=0}^{n-2} \frac{1}{a!} \left( \frac{\partial^a}{\partial t^a} \hP{n} \right)_{t=0} t^a 
  = \sum_{a=0}^{n-2} \begin{pmatrix} n-1 \\ a \end{pmatrix} \tP{n-a} t^a 
\]
where $ \tP{n} = \left. \hP{n} \right|_{t=0}. $ 
This means that the symmetry condition under $ \hP{n}$ is reduced to
\begin{equation}
   \tP{n} F = 0, \qquad \ell + \th \leq n \leq 2\ell+1 \label{PCond2}
\end{equation}

 Now we look at the part containing $ \del{U_{00}} $ in  (\ref{PnExt6}), namely, the second line of the equation.
The contribution to $ \tP{n} $ from the  term $ \displaystyle \sum_k \xi^k_0(n)U_{0k} \del{U_{00}} $ is  $ U_{0\,\ell+\hf} \del{U_{00}}.$ 
Since $ \ell+\hf \geq 3 $ for $  \ell \geq \frac{5}{2} $ the condition (\ref{PCond2}) gives $ \del{U_{00}} F = 0 $ for this range 
of $ \ell.$ 
Thus $ F $ has $ U_{00} $ dependence only for $ \ell=\th.$ 
\end{proof}

 Lemma \ref{Lem:U00} requires a separate treatment of the case  $ \ell = \th. $
In the following sections we solve the symmetry conditions (\ref{PCond1}) 
and (\ref{PCond2}) explicitly for $ \ell = \th $ and for $ \ell > \th $ separately.  
Before proceeding further we here present the formulae of prolongation of $ D$ which is not difficult to verify: 
\begin{eqnarray}
  \hD &=& \sum_{k=2}^{\ell+\hf} 2(\ell+1-k) x_k \del{x_k} -2 U_0 \del{U_0} - \delta_{\ell,\th}\, 4 U_{00} \del{U_{00}}
  - 2\sum_{k=1,2} (\ell+2-k) U_{0k} \del{U_{0k}}
  \nonumber \\
  &-& 2 \sum_{k=1}^{\ell+\hf} \bigl[ (\ell+1-k) U_k \del{U_k} + \sum_{m =k}^{\ell+\hf}  (2\ell+2-k-m) U_{km} \del{U_{km}}  \bigr].
  \label{DExt} 
\end{eqnarray}
The prolongation of $C$ is more involved so we present it in the subsequent sections separately for $ \ell = \th $ and for 
other values of $\ell. $

%
\section{The case of $ \ell = \th$}
\label{Sec:TH}

The goal of this section is to derive the PDEs invariant under the group generated by $ {\mathfrak g}_{\th}. $ 
First we solve the conditions (\ref{PCond1}) and (\ref{PCond2}). We have from (\ref{PnExt2})
\begin{equation}
  \tP{2} = \del{x_2} - \bigl( U_1 \del{U_0} + 2 U_{01} \del{U_{00}} + \sum_{k=1,2} U_{1k} \del{U_{0k}} \bigr) 
  \label{TildP1} 
\end{equation}
and collecting the $ t=0 $ terms of (\ref{PnExt6})
\begin{eqnarray}
  \tP{3} &=& -2 \bigr[ U_2 \del{U_0} + U \del{U_2} + (U_1 + 2 U_{02}) \del{U_{00}} + U_{12} \del{U_{01}} 
         + (U_0 + U_{22}) \del{U_{02}} + U_1 \del{U_{12}} + 2 U_2 \del{U_{22}} \bigr],
   \nonumber \\
  \tP{4} &=& -6 \bigr[ x_2 U \del{U_0} - U \del{U_1} + (U_2 + 2x_2 U_0) \del{U_{00}} + (x_2 U_1 -U_0) \del{U_{01}} 
          +(x_2 U_2 + U) \del{U_{02}}
   \nonumber \\
   & & \qquad 
     - (2U_1 \del{U_{11}} + U_2 \del{U_{12}}) \bigr].
\end{eqnarray}
The symmetry conditions (\ref{PCond1}) and (\ref{PCond2}) are the system of first order PDEs so that it can be solved 
by the standard method of characteristic equation (e.g., \cite{courant1966methods}). 
It is not difficult to verify that the following functions are the solutions to (\ref{PCond1}) and (\ref{PCond2}). 
\begin{eqnarray}
  \phi_1 &=& \frac{U_{11}}{U} - \left( \frac{U_1}{U} \right)^2, \qquad 
  \phi_2 = \frac{U_{22}}{U} - \left( \frac{U_2}{U} \right)^2, 
  \nonumber \\
  \phi_3 &=& \frac{U_{12}}{U} - \frac{U_1 U_2}{U^2}, \qquad \quad 
  \phi_4 = \frac{U_0}{U} + \frac{x_2U_1}{U} - \frac{U_{22}}{2U},
  \nonumber \\  
  \phi_5 &=& \frac{U_{01}}{U} - \frac{U_0 U_1}{U^2} + x_2 \phi_1 - \frac{U_2}{U} \phi_3,
  \nonumber \\
  \phi_6 &=& \frac{U_{02}}{U} + \frac{U_1}{U} - \frac{U_0 U_2}{U^2} - \frac{U_2}{U} \phi_2+ x_2 \phi_3,
  \nonumber \\
  \phi_7 &=& \frac{U_{00}}{U} - \left( \frac{U_0}{U} \right)^2 - \left( \frac{U_1}{U} + \frac{2U_{02}}{U} \right) \frac{U_2}{U} 
     + \left( \frac{2U_0}{U} + \frac{U_{22}}{U} \right) \left( \frac{U_2}{U} \right)^2 
  \nonumber \\
         &-& \left( \frac{U_2}{U} \right)^4 - x_2^2 \phi_1 + 2 x_2 \phi_5. 
   \label{InvFunc1}
\end{eqnarray}
Thus we have proved the following lemma:
\begin{Lemma} \label{Lem:HMPall}
Equation (\ref{InvPDE1}) is invariant under $ {\mathfrak h}_{\th} = \langle \; M, H,   P^{(n)} \; \rangle_{n=1,2,3,4} $ 
if it has the form
 \begin{equation}
    F(\phi_1, \phi_2, \dots, \phi_7) = 0. \label{InvEqForm2}
 \end{equation}
\end{Lemma}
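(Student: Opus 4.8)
The plan is to reduce the statement to the assertion that each of the seven functions $\phi_1,\dots,\phi_7$ in (\ref{InvFunc1}) is a common invariant of the prolonged generators of $ {\mathfrak h}_{\th}, $ after which the claim follows by the chain rule. Indeed, for any $ F = F(\phi_1,\dots,\phi_7) $ and any generator $X$ of $ {\mathfrak h}_{\th} $ one has $ \hat{X}F = \sum_{i=1}^{7}(\del{\phi_i}F)\,\hat{X}\phi_i; $ hence if $ \hat{X}\phi_i = 0 $ for every $i$ and every $ X\in\{M,H,P^{(1)},P^{(2)},P^{(3)},P^{(4)}\}, $ then $ \hat{X}F=0 $ identically, which is stronger than the symmetry condition (\ref{GenSymCond}). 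So the whole matter comes down to checking that $ \phi_1,\dots,\phi_7 $ are simultaneously annihilated by the six prolonged generators.

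Three of these are disposed of at once. Each $\phi_i$ is built only from the ratios $ U_\mu/U $ and $ U_{\mu\nu}/U $ (and the coordinate $x_2$), so it is homogeneous of degree zero under the common rescaling of $ U,U_\mu,U_{\mu\nu} $ that $\hM$ in (\ref{M-ext}) generates; thus $ \hM\phi_i=0. $ Since $ H=\del{t} $ and $ P^{(1)}=\del{x_1} $ have no prolongation and no $\phi_i$ depends on $ t=x_0 $ or on $x_1,$ these act trivially. For $P^{(n)}$ with $n=2,3,4,$ Lemma \ref{Lem:HPM} and Lemma \ref{Lem:U00} let us replace $ \hP{n}\phi_i=0 $ by the $t$-independent conditions $ \tP{m}\phi_i=0,$ $m=2,3,4,$ with $\tP{2}$ as in (\ref{TildP1}) and $\tP{3},\tP{4}$ as displayed immediately afterwards.

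The substance of the proof is the direct evaluation of $ \tP{2}\phi_i,\tP{3}\phi_i,\tP{4}\phi_i, $ which I would carry out following the evident triangular layout of the list. The three functions $\phi_1,\phi_2,\phi_3$ involve only $U$ and the derivatives $ U_1,U_2,U_{11},U_{12},U_{22}; $ $\tP{2}$ kills them trivially (they do not depend on $ x_2,U_0,U_{00},U_{01},U_{02} $), while $\tP{3}$ and $\tP{4}$ either act trivially or annihilate them through a single two-term cancellation pairing a $\del{U_k}$ contribution against a $\del{U_{km}}$ contribution. The function $\phi_4$ is treated by the same kind of pairwise cancellation, the $\del{x_2}$ term of $\tP{2}$ balancing the $U_1\del{U_0}$ term, and so on. Finally $\phi_5,\phi_6,\phi_7$ are written exactly as a ``raw'' ratio part plus correction terms of the shapes $x_2\phi_j$ and $(U_2/U)\phi_j$ with $j<5;$ since those $\phi_j$ are already known to be invariant, the Leibniz rule collapses $\tP{m}$ on a correction term to $(\tP{m}x_2)\phi_j$ or $\tP{m}(U_2/U)\,\phi_j,$ and these are precisely the quantities needed to absorb the lower invariants generated by $\tP{m}$ acting on the raw part. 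The verification thus proceeds upward from $\phi_1$ to $\phi_7$ without circularity.

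The step I expect to be the real labour is the last one for $\phi_7$: it is quartic in the ratios and carries both an $x_2^2\phi_1$ and an $x_2\phi_5$ correction, so several terms compete and the cancellation must be tracked with care. To upgrade the ``if'' to a genuine characterization one may add a dimension count: after imposing $M,H,P^{(1)}$ the admissible $F$ depends on the ten arguments $x_2$ together with the nine ratios $U_\mu/U,U_{\mu\nu}/U$ (here $U_{00}/U$ survives because $\ell=\th,$ by Lemma \ref{Lem:U00}), and the three independent fields $\tP{2},\tP{3},\tP{4}$ span an involutive distribution---their commutators are proportional to $\hM,$ since the $P^{(n)}$ close only onto the centre $M,$ and $\hM$ annihilates the candidate invariants---so by Frobenius there are exactly $10-3=7$ functionally independent joint invariants. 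Exhibiting $\phi_1,\dots,\phi_7$ as seven such invariants then shows the list is complete and that (\ref{InvEqForm2}) is the general $ {\mathfrak h}_{\th} $-invariant equation.
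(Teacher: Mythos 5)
Your proposal is correct and takes essentially the same route as the paper: after reducing the symmetry conditions to the $t$-independent first-order system $\tP{2}F=\tP{3}F=\tP{4}F=0$, one verifies directly that $\phi_1,\dots,\phi_7$ are joint invariants of these operators (the paper phrases this as solving the characteristic system and checking the solutions). Your Frobenius dimension count ($10-3=7$ functionally independent invariants) makes explicit the completeness that the paper leaves implicit in calling the $\phi_i$ ``the solutions.''
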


 Next we consider the further invariance under $ D $ and $ C. $ 
The computation of the second order prolongation of $ C  $ for $ \ell = \th $ is straightforward based on (\ref{ext-generator})-(\ref{def-sigma}). It has the form
\begin{eqnarray}
  \hC &=& -2x_2^2\, \hM + t \hD + 3x_1 \tP{2} - \tC,
  \nonumber \\
  \tC &=& x_2 U_2 \del{U_0} + 3 U_2 \del{U_1} + 4 x_2 U \del{U_2} + 2(U_0+x_2 U_{02}) \del{U_{00}} 
     + (3U_1 + 3 U_{02} + x_2 U_{12}) \del{U_{01}}
  \nonumber \\
     &+& (U_2 + 4x_2 U_0 + x_2 U_{22}) \del{U_{02}} + 6 U_{12} \del{U_{11}} + (4x_2 U_1 + 3 U_{22}) \del{U_{12}}
     + 4(U+2x_2 U_2) \del{U_{22}}.
   \label{CExt1}
\end{eqnarray}
It is an easy exercise to see the action of $ \tC $ on $\phi_k:$
\begin{eqnarray*}
  & & \tC \phi_1 = 2\phi_3, \qquad \tC \phi_2 = \frac{4}{3}, \qquad \tC \phi_3 = \phi_2, 
  \qquad \tC \phi_4 = -\frac{2}{3}, \qquad \tC \phi_5 = \phi_6, 
  \nonumber \\
  & & \tC \phi_6 = 0, \qquad\quad  \tC \phi_7 = \frac{1}{3} \phi_2 + \frac{2}{3} \phi_4. 
\end{eqnarray*}
It follows that the following combinations of $ \phi_k $ are invariant of $ \tC:$
\begin{eqnarray}
  & & w_1 = \hf \phi_2 + \phi_4, \qquad w_2 = 2\phi_3 - \frac{3}{4} \phi_2^2, \qquad w_3 = \frac{1}{2\sqrt{2}}\phi_6,
  \qquad w_4 = \phi_1 - \frac{3}{2} \Bigl(w_2 \phi_2 + \frac{1}{8} \phi_2^3 \Bigr),
  \nonumber \\
  & & 
    w_5 = \frac{1}{\sqrt{2}}\phi_5 - \frac{3}{4\sqrt{2}} w_3 \phi_2, \qquad 
    w_6 = \phi_7 - \frac{1}{2} w_1 \phi_2.
    \label{InvFunc2}
\end{eqnarray}
On the other hand $\hD$ generates the scaling of $ w_k: $ 
\begin{equation}
  \begin{array}{clclc}
    w_1 \ \to \  e^{2\epsilon} w_1, & \quad &  w_2 \ \to \  e^{4\epsilon} w_2, & \quad  & w_3 \ \to \  e^{3\epsilon} w_3,
    \\
        w_4 \ \to \  e^{6\epsilon} w_4, & \quad &  w_5 \ \to \  e^{5\epsilon} w_5, & \quad  & w_6 \ \to \  e^{4\epsilon} w_6.
    \end{array}
  \label{Dscale1}
\end{equation}
With these observations one may construct all invariants of the group which is generated by $ {\mathfrak g}_{\th}:$
\begin{eqnarray}
  & & \psi_1 = \frac{w_2}{w_1^2} = \frac{\Psi_1}{\Phi^2}, \qquad 
  \psi_2 = \frac{w_3^2}{w_1^3} = \frac{\Psi_2^2}{\Phi^3}, 
  \qquad 
  \psi_3 = \frac{w_4}{w_3^2} = \frac{\Psi_3}{\Psi_2^2},
  \nonumber \\
  & & \psi_4 =\frac{w_5^2}{w_1^5} = \frac{\Psi_4^2}{\Phi^5},
  \qquad
  \psi_5 = \frac{w_6}{w_2} = \frac{\Psi_5}{\Psi_1},
  \label{InvFunc3}  
\end{eqnarray}
where
\begin{eqnarray}
  \Phi &=& 2(U_0 + x_2 U_1) U - U_2^2,
  \nonumber \\
  \Psi_1 &=&  8 (U_{12}U - U_1 U_2) U^2 - 3 (U_{22}U-U_2^2)^2,
  \nonumber \\
  \Psi_2 &=& (U_1 +  U_{02}) U^2 - U_2 \bigl( (U_0+U_{22}) U - U_2^2 \bigr)  + x_2( U_{12} U - U_1 U_2) U,
  \nonumber \\
  \Psi_3 &=& 8 U_{11} U^5 - 8 U_1^2 U^4 -12(U_{22}U - U_2^2) (U_{12} U -U_1 U_2) U^2 + 3(U_{22}U-U_2^2)^3,
  \nonumber \\
  \Psi_4 &=& 4 U_{01} U^4 - 4 U_0 U_1 U^3 - 3\bigl( (U_1+U_{02})U - U_0 U_2 \bigr) (U_{22}U-U_2^2) U
      + 3U_2 (U_{22}U-U_2^2)^2
    \nonumber \\
    &+& x_2 \bigl( 4U_{11}U^3 - 4 U_1^2 U^2 - 3(U_{22}U-U_2^2) (U_{12}U-U_1 U_2)  \bigr) U,
  \nonumber \\
  \Psi_5 &=& 4 U_{00} U^3 - 2\bigl( 2U_0^2 + U_0 U_{22} + 2(U_1+2U_{02})U_2 \bigr) U^2
      + 5 (2U_0 + U_{22}) U_2^2 U - 5 U_2^4
  \nonumber \\  
     &+& 2x_2 \bigl( 4U_{01}U^2 - (4 U_0 U_1 + 4 U_2 U_{12} + U_1 U_{22})U + 5 U_1 U_2^2 \bigr) U
     + 4x_2^2 (U_{11} U - U_1^2) U^2.
    \label{PsiPhi}
\end{eqnarray}
Thus we obtain the PDEs with the desired symmetry. 
\begin{Theorem} \label{Thm:threehalf}
 The PDE invariant under the Lie group generated by  the realization  (\ref{CGAgenerators}) of  $ {\mathfrak g}_{\th}$ is 
 given by
  \begin{equation}
    F(\psi_1, \psi_2, \psi_3, \psi_4, \psi_5) = 0 \label{InvEqThm1}
  \end{equation}
  where $ F$ is an arbitrary differentiable function and $ \psi_i $ is given in (\ref{InvFunc3}). 
  Explicit form of the symmetry generators are as follows:
  \begin{eqnarray}
    & & M = U \del{U}, \qquad \,D = 2t \del{t} + 3 x_1 \del{x_1} + x_2 \del{x_2}, \qquad 
        H = \del{t},
    \nonumber \\
    & & C = t(t\del{t}+3x_1 \del{x_1} + x_2 \del{x_2}) + 3 x_1 \del{x_2}-2x_2^2 U \del{U},
    \nonumber \\
    & & P^{(1)} = \del{x_1}, \qquad P^{(2)} = t \del{x_1} + \del{x_2},
    \nonumber \\
    & & P^{(3)} = t^2 \del{x_1} + 2t \del{x_2} -2 x_2 U \del{U},
    \nonumber \\
    & & P^{(4)} = t^3 \del{x_1} + 3t^2 \del{x_2} -6(t x_2 -x_1) U \del{U}.
  \end{eqnarray}
\end{Theorem}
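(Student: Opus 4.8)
The plan is to reduce invariance under the full group to a short list of conditions already prepared by the preceding lemmas. Since $ {\mathfrak g}_{\th} $ is spanned by $ {\mathfrak h}_{\th} = \langle M, H, P^{(1)}, \dots, P^{(4)} \rangle $ together with $ D $ and $ C, $ a function $ F $ defines a PDE (\ref{InvPDE1}) invariant under the connected group generated by (\ref{CGAgenerators}) if and only if $ F $ is annihilated by the prolongations of all eight generators. By Lemma \ref{Lem:HMPall} the $ {\mathfrak h}_{\th} $-part of this requirement is exactly that $ F $ be a function of $ \phi_1, \dots, \phi_7, $ so only the two conditions $ \hD F = 0 $ and $ \hC F = 0 $ remain to be imposed on such an $ F. $

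First I would use the decomposition $ \hC = -2x_2^2\,\hM + t\,\hD + 3x_1\,\tP{2} - \tC $ from (\ref{CExt1}). On $ F = F(\phi_1, \dots, \phi_7) $ one has $ \hM F = 0 $ and $ \tP{2} F = 0 $ by construction, so $ \hC F = t\,\hD F - \tC F. $ Since none of the $ \phi_k $ depends on $ t $ or $ x_1, $ the same is true of $ \hD F $ and $ \tC F, $ and reading $ \hC F $ as a polynomial in $ t $ forces its coefficient of $ t $ and its constant term to vanish separately. Hence $ \hC F = 0 $ splits into the pair $ \hD F = 0 $ and $ \tC F = 0 $ (which already subsumes the $ D $-condition), and the whole problem reduces to solving these two first-order linear equations on the seven coordinates $ \phi_1, \dots, \phi_7. $

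Next I would solve $ \tC F = 0. $ Treating the tabulated action of $ \tC $ on the $ \phi_k $ as a single nonvanishing vector field on the seven-dimensional $ \phi $-space, its invariant ring is generated by $ 7 - 1 = 6 $ functionally independent functions; the combinations $ w_1, \dots, w_6 $ of (\ref{InvFunc2}) are annihilated by $ \tC $ and independent, so they form a complete set and any $ \tC $-invariant $ F $ is a function of them. Writing $ F = G(w_1, \dots, w_6) $ and invoking the scaling (\ref{Dscale1}), which records $ \hD w_k = d_k w_k $ with weights $ (d_1, \dots, d_6) = (2,4,3,6,5,4), $ the remaining condition becomes $ \hD G = \sum_k d_k w_k\,\del{w_k} G = 0, $ so $ G $ must be homogeneous of degree zero. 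Such functions are generated by $ 6 - 1 = 5 $ independent ratios, and the five $ \psi_i $ of (\ref{InvFunc3}) are weight-zero and independent, hence complete. This yields $ F = F(\psi_1, \dots, \psi_5), $ and rewriting each $ \psi_i $ back through the $ w_k $ and $ \phi_k $ in terms of $ U $ and its derivatives produces the explicit $ \Phi, \Psi_1, \dots, \Psi_5 $ of (\ref{PsiPhi}).

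The conceptually decisive step is the $ t $-grading argument that decouples $ \hC F = 0 $ into $ \hD F = 0 $ and $ \tC F = 0; $ everything after it is invariant theory for two explicit vector fields on the quotient. Accordingly, the main obstacle I expect is bookkeeping rather than ideas: one must confirm that the proposed $ w_k $ are genuinely $ \tC $-invariant and mutually independent, and that the $ \psi_i $ are independent degree-zero combinations, so that the dimension counts $ 7 \to 6 \to 5 $ are actually attained and no invariant has been overlooked. A single misplaced factor in the prolongation of $ C $ or in the $ \phi_k $ would break either invariance or independence, so checking functional independence, for instance via a Jacobian rank computation, is where the real care is required.
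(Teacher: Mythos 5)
Your proposal is correct and follows essentially the same route as the paper: Lemma \ref{Lem:HMPall} handles the $\mathfrak{h}_{\th}$ part, the decomposition (\ref{CExt1}) of $\hC$ together with the $t$-independence of $\tC F$ and $\hD F$ decouples the $C$-condition into $\tC F=0$ and $\hD F=0$, and the invariants $w_1,\dots,w_6$ of (\ref{InvFunc2}) followed by the weight-zero ratios (\ref{InvFunc3}) give the final answer. The only substantive check the paper performs that you flag but do not carry out is the explicit verification of $\tC\phi_k$ and hence of the $\tC$-invariance of the $w_k$, which is exactly the bookkeeping you correctly identify as the remaining work.
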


%
\section{The case of $ \ell \geq \frac{5}{2} $ : $\h$-symmetry}
\label{Sec:h}

  As shown in Lemma \ref{Lem:U00} the function $F$ is independent of $ U_{00} $ so that the PDE which we have at this stage is of  the form
\begin{equation}
F\left( x_a, \frac{U_{\mu}}{U},  \frac{U_{01}}{U}, \frac{U_{02}}{U}, \frac{U_{km}}{U},  \right) = 0, \quad 
        a \in {\cal I}_2,  \ \mu \in \I{0}, \ k, m  \in {\cal I}_1
        \label{InvEqForm3}
\end{equation}
We wants to make the PDE (\ref{InvEqForm3}) invariant under all the generators of $\h.$ 
Invariance under $M $ and $P^{(1)} $ has been completed. We need to consider the invariance under  $ P^{(n)} $ for $ n \in \I{2}. $ 
The symmetry conditions are (\ref{PCond1}) and (\ref{PCond2}). We give $ \tP{n} $ more explicitly. 
From (\ref{PnExt2}) we have
\begin{equation}
  \tP{n} = \del{x_n} - (n-1) \bigl( U_{n-1} \del{U_0} + U_{1\,n-1} \del{U_{01}} + U_{2\,n-1} \del{U_{02}} \bigr), 
  \quad 2 \leq n \leq \ell+\hf
  \label{PnExt7}
\end{equation}
For $ n \geq \ell + \th $ the generator $ \tP{n} $ is obtained by collecting $ t=0 $ terms of (\ref{PnExt6}). 
It has a slightly different form depending on the value of $n.$ 
For $ n = \ell + \th $ it is given by
\begin{equation}
  \tP{\ell+\th} = -\bigl( \ell+\hf \bigr) \Bigl[ 
      U_{\ell+\hf} \del{U_0} + \sum_{k=1,2} U_{k\,\ell+\hf} \del{U_{0k}} 
      + a_{\ell} \Bigl( U \del{U_{\ell+\hf}} +  \sum_{m=1}^{\ell+\hf} U_m \del{U_{m\, \ell+\hf}} + U_{\ell+\hf} \del{U_{\ell+\hf\,\ell+\hf}} \Bigr) \Bigr],
   \label{PnExt8}
\end{equation}
where
\[
 a_{\ell} = \Bigl( \bigl( \ell-\hf \bigr) ! \Bigr)^2.
\]
For other values of $n$ they are given by
\begin{eqnarray}
  \tP{n} &=& -I_{n-1} \Bigl[ -(2\ell+2-n) x_{2\ell+3-n} \Bigl( U \del{U_0} + \sum_{k=1,2}  U_k \del{U_{0k}} \Bigr) + U \del{U_{2\ell+2-n}}
   \nonumber \\
   &+& \sum_{k=1}^{\ell+\hf} U_k \del{U_{k\,2\ell+2-n}} + U_{ 2\ell+2-n} \del{U_{2\ell+2-n\,2\ell+2-n}} \Bigr],
   \qquad \ell+\frac{5}{2} \leq n \leq 2\ell-1
   \nonumber \\
  \tP{2\ell} &=& -I_{2\ell-1} \bigl[ -2x_3 (U \del{U_0} + U_1 \del{U_{01}}) + U \del{U_2} + (U_0 - 2 x_3 U_2) \del{U_{02}}
   + \sum_{k=1}^{\ell+\hf} U_k \del{U_{2k}} + U_2 \del{U_{22}} \bigr],
   \nonumber
\end{eqnarray}
and
\[
  \tP{2\ell+1} = -I_{2\ell} \bigl[ -x_2 U \del{U_0} + U \del{U_1} + (U_0 - x_2 U_1) \del{U_{01}} - (U+x_2 U_2) \del{U_{02}} 
   + \sum_{k=1}^{\ell+\hf} U_k \del{U_{1k}} + U_1 \del{U_{11}} \bigr].
\]
The best way to solve the symmetry condition is to start from the larger values of $n$. 
We first investigate the symmetry conditions for $ P^{(2\ell+1)} $ to $ P^{(\ell+\th)} $ in this order.  
They are separated in three cases (two cases for $ \ell=\frac{5}{2}$). 
\begin{Lemma} \label{Lem:LastTwo}
\begin{enumerate}
\renewcommand{\labelenumi}{(\roman{enumi})}
\item 
Equation (\ref{InvEqForm3}) is invariant under $ P^{(2\ell+1)} $ and $ P^{(2\ell)} $ if it has the form
  \begin{equation}
     F\left(x_a, \frac{U_0}{U}, \tilde{\phi}, \frac{U_k}{U}, \phi_{01}, \phi_{02}, \phi_{1b}, \phi_{2b}, U_{km} \right) = 0, \quad 
     a \in \I{2}, \  k, m \in \I{3}, \ b \in \I{1} 
      \label{InvEqForm4}
  \end{equation}
  where
  \begin{eqnarray}
    \tilde{\phi} &=& \frac{U_0}{U} + x_2 \frac{U_1}{U} + 2x_3 \frac{U_2}{U},
    \nonumber \\
    \phi_{01} &=& \frac{U_{01}}{U} - \frac{U_0 U_1}{U^2}, \qquad 
    \phi_{02} = \frac{U_{02}}{U} + \frac{U_1}{U} - \frac{U_0 U_2}{U^2},
    \nonumber \\
    \phi_{\alpha k} &=& \frac{U_{\alpha k}}{U} - \frac{U_{\alpha} U_k}{U^2}, \quad \alpha =1, 2
    \label{phi-def}
  \end{eqnarray}
\item 
Equation (\ref{InvEqForm4}) is invariant under $ P^{(n)}, \ \ell+\frac{5}{2} \leq n \leq 2\ell-1 $  if it has the form 
 \begin{equation}
    F\left(x_a, \phi, \frac{U_{\ell+\hf}}{U}, \phi_{01}, \phi_{02}, \phi_{km} \right)= 0, \quad 
    a \in \I{2}, \ k, m \in \I{1}
    \label{InvEqForm5} 
 \end{equation}
 where $ \phi_{01}, \phi_{02} $ are given in (\ref{phi-def}) and 
 \begin{eqnarray}
   \phi = \frac{U_0}{U} + \sum_{j=1}^{\ell-\hf} j x_{j+1} \frac{U_j}{U}, \qquad 
   \phi_{km} = \frac{U_{km}}{U} - \frac{ U_k U_m }{U^2}.
   \label{phi-def2}
 \end{eqnarray}
\item 
Equation (\ref{InvEqForm5}) is invariant under $ P^{(\ell+\th)} $  if it has the form 
 \begin{equation}
    F\bigl( x_a, w, w_{01}, w_{02}, \phi_{km}\bigr) = 0,  \quad a \in \I{2}, \ k,m \in \I{1}
    \label{InvEqForm6}  
 \end{equation}
 where $ \phi_{km} $ is given in (\ref{phi-def2}) and 
 \begin{equation}
   w = \phi - \frac{U_{\ell+\hf}^2}{2 a_{\ell} U^2}, \qquad 
   w_{0\alpha} = \phi_{0\alpha} - \frac{\phi_{\alpha\,\ell+\hf}}{a_{\ell}} \frac{U_{\ell+\hf}}{U}, \quad 
   \alpha = 1, 2
   \label{w-def}
 \end{equation}
 The constant $ a_{\ell} $ is defined below the equation (\ref{PnExt8}). 
\end{enumerate}
For $ \ell=\frac{5}{2}$ we have the cases (i) and (iii). 
\end{Lemma}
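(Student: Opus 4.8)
The plan is to read each symmetry condition $\tP{n} F = 0$ as a first-order linear PDE for $F$ on the jet space with coordinates $(x_a, U, U_\mu, U_{\mu\nu})$, and to solve it by the method of characteristics: since $\tP{n}$ is a vector field, the general solution is an arbitrary function of a complete set of functionally independent first integrals of $\tP{n}$. The reduction of the full condition under $P^{(n)}$ to the condition $\tP{n}F=0$ is already in hand from (\ref{PCond2}), so I work directly with the operators displayed above and in (\ref{PnExt7})--(\ref{PnExt8}). Following the hint that one should start from the largest $n$, I would impose the conditions in the order $\tP{2\ell+1}, \tP{2\ell}, \dots, \tP{\ell+\th}$, at each stage restricting $F$ to depend only on the invariants that survived the previous stage. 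The three items of the lemma then correspond to the three structurally distinct blocks of operators: $\{\tP{2\ell+1}, \tP{2\ell}\}$, then $\{\tP{n} : \ell+\frac{5}{2}\le n\le 2\ell-1\}$, then the single operator $\tP{\ell+\th}$.

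For (i), I would first integrate the characteristic system of $\tP{2\ell+1}$. Because this vector field has no $\del{x_a}$ component, every $x_a$ with $a\in\I{2}$ is automatically a first integral. In the $U$-sector the integration is driven by the coefficient cancellations: $U$ itself is invariant, while along the flow $U_0$ and $U_1$ vary so that $U_0 + x_2 U_1$ stays constant, which is the seed of $\tilde{\phi}$ in (\ref{phi-def}); the analogous pairings of $U_{01}, U_{02}$ against $U_0U_1, U_0U_2$ produce $\phi_{01}, \phi_{02}$, the second-derivative coordinates combine into $\phi_{1b}, \phi_{2b}$, and the $U_{km}$ with $k,m\in\I{3}$ are left untouched. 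I would then impose $\tP{2\ell} F = 0$ on this reduced list and verify that each listed invariant is annihilated — e.g. under $\tP{2\ell}$ one has $\del{s}(U_0 + 2x_3 U_2)=0$, so $\tilde{\phi}$ survives — obtaining exactly the form (\ref{InvEqForm4}).

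Items (ii) and (iii) proceed by the same mechanism applied to the already-reduced function. The operators with $\ell+\frac{5}{2}\le n\le 2\ell-1$ share the common template written above in terms of $I_{n-1}$, so a single characteristic computation parametrized by $n$ shows that $\tilde{\phi}$ and the $\phi_{0\alpha}$ must be promoted to the $n$-independent combinations $\phi$ and $\phi_{km}$ of (\ref{phi-def2}), yielding (\ref{InvEqForm5}); the uniformity of the template is precisely what lets all these conditions be solved simultaneously rather than one index at a time. Finally the single operator $\tP{\ell+\th}$ of (\ref{PnExt8}), whose distinctive feature is the coefficient $a_\ell$ multiplying the $\del{U_{\ell+\hf}}$ and $\del{U_{m\,\ell+\hf}}$ terms, mixes $U_{\ell+\hf}$ into $\phi$ and into $\phi_{0\alpha}$; integrating its characteristics refines these into $w$ and $w_{0\alpha}$ of (\ref{w-def}), giving (\ref{InvEqForm6}).

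The routine part is the explicit integration of the characteristic ODEs, which amounts to checking the sign-and-coefficient cancellations already built into the definitions (\ref{phi-def})--(\ref{w-def}). I expect the genuine obstacle to be completeness and bookkeeping rather than any single hard step: one must confirm that the listed invariants are not merely invariant but form a \emph{complete} functionally independent set — their number equals the dimension of the jet space minus the rank of the distribution spanned by the operators active at that stage — and one must track carefully how the many second-order variables $U_{\mu\nu}$ pair off, since these are the coordinates most prone to being miscounted. The case $\ell=\frac{5}{2}$ needs separate mention precisely because the index range in (ii) is then empty, so that step is vacuous and only (i) and (iii) occur, exactly as the statement records.
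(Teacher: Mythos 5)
Your proposal is correct and follows essentially the same route as the paper: the authors likewise treat each $\tilde{P}^{(n)}F=0$ as a first-order PDE solved by characteristics, impose the conditions in decreasing order of $n$ in the same three blocks, and at each stage rewrite the remaining conditions in terms of the invariants already found. The only point worth noting is that the paper simply exhibits the invariants as solutions without discussing functional independence or completeness, which you correctly flag as the real bookkeeping burden (and your observation that the range in (ii) is empty for $\ell=\frac{5}{2}$ matches the paper's closing remark).
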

\begin{proof}[Proof of Lemma \ref{Lem:LastTwo}]
(i) 
 The symmetry conditions $ \tP{2\ell+1} F = \tP{2\ell} F =  0 $ is a system of first order PDEs. 
They are solved by the standard technique and it is not difficult to see that the $ \tilde{\phi} $ and  $\phi$'s 
given in (\ref{phi-def}) are solutions to the system of PDEs. 
\\
(ii) 
 It is immediate to verify that $\phi_{01}, \phi_{02}$ solve the symmetry conditions $ \tP{n} F = 0 $ for 
 $ \ell+\frac{5}{2} \leq n \leq 2\ell-1. $ Rewriting the symmetry conditions in terms of the variables given in (\ref{phi-def}) 
it is not difficult to solve them and find $ \phi $ and $ \phi_{km} $ in (\ref{phi-def2}) are the solutions. 
\\
(iii)  
 It is immediate to see that all $ \phi_{km}, \ k, m \in \I{1} $ solves the symmetry condition  $ \tP{\ell+\th} F = 0, $ 
however, $ \phi, \phi_{01} $ and $ \phi_{02} $ do not. Rewriting the symmetry condition in terms of $\phi$'s then 
solving the condition is an easy task. One may see that the variables in (\ref{w-def}) are solution of it. 
\end{proof}
\begin{Theorem} \label{Thm:HMPall}
The PDE invariant under the group generated by $ \h $ with $ \ell \geq \frac{5}{2} $ is given by
 \begin{equation}
    F\bigl( w, \alpha_n, \beta_n, \phi_{km} \bigr) =0, \quad n \in \I{2}, \ k,m \in \I{1}
    \label{InvEqForm7}      
 \end{equation}
 where  $ F$ is an arbitrary differentiable function and 
 \begin{eqnarray}
   \alpha_n &=& w_{01} + (n-1) x_n \phi_{1\,n-1}
   \nonumber\\
   &=& \frac{U_{01}}{U} - \frac{U_0 U_1}{U^2} 
    - \frac{U_{\ell+\hf}}{a_{\ell} U} \left( \frac{U_{1\,\ell+\hf}}{U} - \frac{U_1 U_{\ell+\hf}}{U^2} \right)
    + (n-1) x_n \left( \frac{U_{1\,n-1}}{U} - \frac{U_{1} U_{n-1}}{U^2} \right),
   \nonumber \\
   \beta_n &=& w_{02} + (n-1) x_n \phi_{2\,n-1}
   \nonumber\\
   &=& \frac{U_{02}}{U} + \frac{U_1}{U} - \frac{U_0 U_2}{U^2} 
    - \frac{U_{\ell+\hf}}{a_{\ell} U} \left( \frac{U_{2\,\ell+\hf}}{U} - \frac{U_2 U_{\ell+\hf}}{U^2} \right)   
    + (n-1) x_n \left( \frac{U_{2\,n-1}}{U} - \frac{U_{2} U_{n-1}}{U^2} \right).
    \label{ab-def}
 \end{eqnarray}
\end{Theorem}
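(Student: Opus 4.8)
The plan is to carry the reduction of Lemma \ref{Lem:LastTwo} to its conclusion. At the form (\ref{InvEqForm6}) the equation is already invariant under $M,H$, under $P^{(1)}$, and under every $P^{(n)}$ with $\ell+\th\le n\le 2\ell+1$, so the only conditions still to be imposed are $\tP{n}F=0$ for the \emph{lower} indices $n\in\I{2}$, with $\tP{n}$ given by (\ref{PnExt7}). The first step is to compute how each such $\tP{n}$ acts on the five kinds of arguments of (\ref{InvEqForm6}). Since $\tP{n}$ contains only $\del{x_n}$ together with $\del{U_0},\del{U_{01}},\del{U_{02}}$, and since $\ell+\hf\ge 3$ for $\ell\ge\frac52$ (so the highest-index pieces of $w,w_{01},w_{02}$ are left untouched), I expect a short computation to yield
\[
 \tP{n}x_a=\delta_{na},\qquad \tP{n}w=0,\qquad \tP{n}\phi_{km}=0,\qquad \tP{n}w_{0\alpha}=-(n-1)\,\phi_{\alpha\,n-1}\quad(\alpha=1,2).
\]
Here the vanishing of $\tP{n}w$ is the key cancellation: the $\del{x_n}$ hitting the term $j\,x_{j+1}U_j/U$ of $\phi$ with $j=n-1$ exactly cancels the contribution of $-(n-1)U_{n-1}\del{U_0}$, while the $U_{\ell+\hf}^2$ piece of $w$ and the $U_{\ell+\hf}$ pieces of $w_{0\alpha}$ are annihilated because their indices avoid $0,01,02$.

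Second, I would observe that, once restricted to the coordinates $(x_a,w,w_{01},w_{02},\phi_{km})$, the operators $\{\tP{n}\}_{n\in\I{2}}$ have coefficients $\phi_{\alpha\,n-1}$ that do not depend on the variables being differentiated; hence they pairwise commute and form an involutive system of $\ell-\hf$ first-order linear equations. The $\ell-\hf$ coordinates $x_2,\dots,x_{\ell+\hf}$ together with $w_{01},w_{02}$ are the only ``active'' variables, while $w$ and all $\phi_{km}$ are passive parameters. Solving this system by the method of characteristics is then routine: $w$ and every $\phi_{km}$ are manifest invariants, and along the flow of $\tP{n}$ the quantity $w_{0\alpha}$ changes by $-(n-1)\phi_{\alpha\,n-1}\,dx_n$. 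The simultaneous invariants extending $w_{0\alpha}$ are therefore obtained by installing the compensating $x$-linear term for \emph{every} generator at once,
\[
 w_{0\alpha}+\sum_{n\in\I{2}}(n-1)\,x_n\,\phi_{\alpha\,n-1},\qquad \alpha=1,2,
\]
whose summands $(n-1)x_n\phi_{\alpha\,n-1}$ are precisely the per-generator terms displayed in $\alpha_n,\beta_n$ of (\ref{ab-def}).

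The remaining step is to verify sufficiency and completeness. Sufficiency is the one-line check that $\tP{n}$ annihilates $w$, each $\phi_{km}$, and each assembled object above (the $\delta_{na}$ produced by $\del{x_n}$ cancelling $\tP{n}w_{0\alpha}$). Completeness follows from a dimension count: $\ell-\hf$ independent, commuting vector fields on the $(\ell+\th)$-dimensional active space admit exactly two functionally independent invariants, and these, together with $w$ and the $\binom{\ell+\th}{2}$ functions $\phi_{km}$, exhaust the invariant ring. This forces $F$ to be an arbitrary function of those invariants, which is the content of (\ref{InvEqForm7}).

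I expect the main obstacle to be the correct assembly of the simultaneous invariants rather than any single derivative computation. The subtle point is that each $\tP{n}$ shifts \emph{both} $w_{01}$ and $w_{02}$, so a single building block such as $w_{01}+(n-1)x_n\phi_{1\,n-1}$ is annihilated only by its own $\tP{n}$ and not by $\tP{n'}$ for $n'\neq n$; the genuinely $\h$-invariant quantity must collect the corrections over all $n\in\I{2}$ at once. Getting this bookkeeping right across the whole range $2\le n\le\ell+\hf$, and reconciling it with the term-by-term form (\ref{ab-def}), is where the care lies, the rest being the same characteristic-method manipulations already used in Lemma \ref{Lem:LastTwo}.
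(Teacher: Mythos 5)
Your reduction is the same as the paper's: both proofs observe that $w$ and all $\phi_{km}$ already satisfy $\tP{n}F=0$ for $\tP{n}$ as in (\ref{PnExt7}), so that the residual system lives only on the active variables $x_a, w_{01}, w_{02}$ and reads $\bigl(\del{x_n}-(n-1)(\phi_{1\,n-1}\del{w_{01}}+\phi_{2\,n-1}\del{w_{02}})\bigr)F=0$ for $n\in\I{2}$, which is then solved by characteristics. Your preliminary computations ($\tP{n}w=0$, $\tP{n}\phi_{km}=0$, $\tP{n}w_{0\alpha}=-(n-1)\phi_{\alpha\,n-1}$) all check out.

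Where you diverge from the paper is in the form of the resulting invariants, and here your version is the defensible one. As you note, $\alpha_n=w_{01}+(n-1)x_n\phi_{1\,n-1}$ satisfies $\tP{n}\alpha_n=0$ but $\tP{n'}\alpha_n=-(n'-1)\phi_{1\,n'-1}\neq 0$ for $n'\neq n$, so the individual $\alpha_n,\beta_n$ of (\ref{ab-def}) are not simultaneous invariants of the whole family $\{\tP{n}\}_{n\in\I{2}}$; an arbitrary $F$ of all of them is not annihilated by every $\tP{n'}$. Your dimension count makes this quantitative: $\ell-\hf$ commuting, independent vector fields on the $(\ell+\th)$-dimensional active space leave exactly two functionally independent invariants, namely $w_{0\alpha}+\sum_{n\in\I{2}}(n-1)x_n\phi_{\alpha\,n-1}$ for $\alpha=1,2$, whereas the theorem as printed lists $2(\ell-\hf)$ separate quantities. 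The statement should therefore be read (or amended) with the sum over $n$ installed, exactly as you write it; the paper's proof, which only asserts that the solutions "are given by $\alpha_n$ and $\beta_n$," glosses over this. The discrepancy is ultimately harmless for the main result, since Lemma \ref{Lem:ab-indep} forces $F$ to be independent of these variables before Theorem \ref{Thm:Main}, but your bookkeeping of the corrections over all $n$ at once is the correct completion of this intermediate step.
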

\begin{proof}[Proof of Theorem \ref{Thm:HMPall}]
 Theorem is proved by making the equation (\ref{InvEqForm6}) invariant under $ P^{(n)} $ with $ 2 \leq n \leq \ell+\hf. $ 
 It is easy to see that $ w $ and all $ \phi_{km} $ solve the symmetry conditions $ \tP{n} F = 0 $ with $ \tP{n} $ given by (\ref{PnExt7}). 
 Thus the symmetry conditions are written in terms of only $ x_n $ and $ w_{0\alpha}: $ 
 \[
      \Big( \del{x_n} - (n-1) ( \phi_{1\,n-1} \del{w_{01}} + \phi_{2\,n-1} \del{w_{02}}) \Bigr) F = 0, \quad n \in \I{2}
 \]
 It is easily verified that the solutions of this system of equations are given by $ \alpha_n $ and $ \beta_n.$ 
 Thus we have proved the theorem.
\end{proof} 

%
\section{The case of $ \ell \geq \frac{5}{2} $ : $\g$-symmetry}
\label{Sec:g}

  Our next task is to make the equation (\ref{InvEqForm7}) invariant under $ D$ and $C. $ 
From the equation (\ref{DExt}) one may see that $ D $ generates the following scaling:
\begin{equation}
  w \ \to \ e^{-2\epsilon} w, \quad \alpha_n \ \to \ e^{-2(\ell+1)\epsilon} \alpha_n, \quad 
  \beta_n \ \to \ e^{-2\ell\epsilon} \beta_n, \quad 
  \phi_{km} \ \to \ e^{-2(2\ell+2-k-m)\epsilon} \phi_{km}.
  \label{Dscale2}
\end{equation}
Now we need the prolongation of $C$ up to second order.  
After lengthy but straightforward computation one may obtain the formula:
\begin{eqnarray}
 \hC &=& - \frac{ b_{\ell} }{2} x_{\ell+\hf}^2 \hM + t \hD +2\ell x_1 \tP{2} - \tC, 
 \nonumber \\
 \tC &=& - \sum_{k=2}^{\ell-\hf} \lambda_k x_k \del{x_{k+1}} + \sum_{k=2}^{\ell+\hf} 2(\ell+1-k) x_k U_k \, \del{U_0}
 \nonumber\\
 &+& \sum_{k=1}^{\ell-\hf} \Bigl[
     \lambda_k U_{k+1} \del{U_k} + \sum_{m=k}^{\ell-\hf} (\lambda_k U_{k+1\,m}+ \lambda_m U_{k\,m+1}) \del{U_{km}}
     + (\lambda_k U_{k+1\,\ell+\hf} + b_{\ell} x_{\ell+\hf} U_k) \del{U_{k\,\ell+\hf}}
   \Bigr]
 \nonumber \\
 &+& \sum_{k=1,2} \Bigl[
      2(\ell+1-k) U_k + \sum_{m=2}^{\ell+\hf} 2 (\ell+1-m) x_m U_{km} + \lambda_k U_{0\,k+1}
   \Bigr] \del{U_{0k}}
 \nonumber \\
 &+& 
   b_{\ell} \Bigl[
     x_{\ell+\hf} U\, \del{U_{\ell+\hf}} + (U + 2 x_{\ell+\hf} U_{\ell+\hf} ) \del{U_{\ell+\hf\,\ell+\hf}} \,
   \Bigr],
 \label{CExt2}
\end{eqnarray}
where
\[
 b_{\ell}= \Bigl( \bigl(\ell+\hf \bigr)! \Bigr)^2, \qquad \lambda_k = 2\ell+1-k.
\]
One may ignore $\hM $ and $ \tP{2}$ since we have already taken them into account. 
$ \tC $ is independent of $ t $ so that the invariance under $ D $ and $ C $ is reduced to the one under 
$ D $ and $ \tC. $ An immediate consequence of the equation (\ref{CExt2}) of $ \tC $ is that $F$ may not depend on 
$ \alpha_n $ and $ \beta_n:$
\begin{Lemma} \label{Lem:ab-indep}
A necessary condition for the invariance of the equation (\ref{InvEqForm7}) under $C $ is 
that the function $F$ is independent of $ \alpha_n $ and $ \beta_n.$ 
\end{Lemma}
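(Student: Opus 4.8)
The plan is to reduce the $C$-invariance to a condition on $\tC$ alone, and then to read off the obstruction from a ``forbidden'' mixed derivative that $\tC$ is forced to generate.

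First I would use the decomposition $\hC = -\frac{b_{\ell}}{2}x_{\ell+\hf}^2\,\hM + t\hD + 2\ell x_1\tP{2} - \tC$ of (\ref{CExt2}). Since the equation (\ref{InvEqForm7}) is already invariant under $M$ and $P^{(2)}$, we have $\hM F = \tP{2} F = 0$, so $\hC F = t\,\hD F - \tC F$. Both $\tC F$ and $\hD F$ are independent of $t=x_0$, because the invariants $w,\alpha_n,\beta_n,\phi_{km}$ contain no $x_0$ and $\tC$ carries no $\del{t}$. Writing the symmetry condition as $\hC F=\nu F$ and matching powers of $t$ then splits it into the two independent conditions $\tC F\equiv 0$ and $\hD F\equiv 0$ (mod $F=0$); in particular $C$-invariance forces $\tC F\equiv 0$ $(\bmod\ F=0)$, which is the only ingredient I need.

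Next I would isolate the role of the two variables $U_{01}$ and $U_{02}$. Inspecting (\ref{ab-def}) and (\ref{phi-def2}), among the arguments of $F$ in (\ref{InvEqForm7}) the derivative $U_{01}$ occurs only in the $\alpha_n$ (each with $\partial\alpha_n/\partial U_{01}=1/U$) and $U_{02}$ occurs only in the $\beta_n$ (each with $\partial\beta_n/\partial U_{02}=1/U$), while the remaining invariants $w$ and $\phi_{km}$ involve no mixed derivative $U_{0k}$. On the other hand (\ref{CExt2}) contains the terms $\lambda_1 U_{02}\,\del{U_{01}}$ and $\lambda_2 U_{03}\,\del{U_{02}}$, so $\tC$ raises the time index of these mixed derivatives: $\tC(U_{01})$ contains $\lambda_1 U_{02}$ and $\tC(U_{02})$ contains $\lambda_2 U_{03}$. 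Consequently $\tC\beta_n$ contains the term $\lambda_2 U_{03}/U$. By Lemma \ref{Lem:HPM}(ii) and Lemma \ref{Lem:U00}, for $\ell\ge\frac52$ the function $F$ is independent of $U_{03}$ (indeed of every $U_{0m}$, $m\in\I{3}$), and $U_{03}$ enters $\tC F$ through no channel other than $\tC(U_{02})$, since $U_{03}$ appears in $\tC$ only as the coefficient of $\del{U_{02}}$.

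The conclusion then comes from collecting, in $\tC F=\sum_I(\tC I)\,\partial F/\partial I$, the coefficient of the forbidden variable $U_{03}$: on the surface $F=0$, along which $U_{03}$ is unconstrained, this coefficient must vanish, which removes the $\beta$-dependence of $F$; once $F$ no longer depends on any $\beta_n$ it is independent of $U_{02}$, so $U_{02}$ becomes forbidden in turn and enters $\tC F$ only through $\tC(U_{01})$, removing the $\alpha$-dependence by the same argument. The main obstacle is to obtain this independence for each individual index $n\in\I{2}$ rather than for a single combination of them: all the $\beta_n$ carry $U_{02}$ (hence, after applying $\tC$, the forbidden $U_{03}$) with the same coefficient $1/U$ and with the same $\hD$-weight $-2\ell$, so naive coefficient matching yields only a condition on $\sum_n \partial F/\partial\beta_n$. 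Separating the indices requires grading the determining equation by the explicit spatial variables, using the shift terms $-\lambda_k x_k\,\del{x_{k+1}}$ and the $x_m$-weighted parts of the $\del{U_{0k}}$-coefficients in (\ref{CExt2}) together with the $\hD$-homogeneity recorded in (\ref{Dscale2}); carrying out this bookkeeping so that every $\alpha_n$ and $\beta_n$ is forced to drop out is the delicate part of the argument.
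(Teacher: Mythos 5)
Your argument is, at its core, the same as the paper's: reduce the $C$-condition to $\tC F=0$ (the paper does this just above the lemma, using that $\tC$ carries no explicit $t$), then observe that $U_{03}$ occurs in $\tC$ only in the term $\lambda_2 U_{03}\del{U_{02}}$ while $F$ is already known to be independent of $U_{03}$, which forces $\del{U_{02}}F=0$; then cascade, since $U_{02}$ occurs in $\tC$ only in $\lambda_1 U_{02}\del{U_{01}}$, to $\del{U_{01}}F=0$. That is exactly the published proof, and that part of your write-up is correct.

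Where you go beyond the paper is your closing paragraph, and the subtlety you flag there is genuine: because every $\beta_n$ contains $U_{02}$ with the same coefficient $1/U$, the condition $\del{U_{02}}F=0$ literally yields only $\sum_{n}\partial F/\partial\beta_n=0$, i.e.\ independence of the common piece $w_{02}$, and a priori leaves open a dependence on the differences $\beta_n-\beta_m=(n-1)x_n\phi_{2\,n-1}-(m-1)x_m\phi_{2\,m-1}$ (likewise for the $\alpha_n$). The paper's proof silently identifies ``independent of $U_{02}$'' with ``independent of every $\beta_n$'' and never addresses this. However, as a self-contained proof yours is incomplete at precisely this point: you announce that separating the individual indices is ``the delicate part'' and stop. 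To close it, note that any residual dependence on $\beta_n-\beta_m$ is a dependence on the combinations $x_n\phi_{2\,n-1}$, i.e.\ it reintroduces explicit $x_a$'s into $F$ beyond those contained in $w$; acting with $\tC$ on such a combination (using $\tC x_{k+1}=-\lambda_k x_k$ and $\tC\phi_{km}=\lambda_k\phi_{k+1\,m}+\lambda_m\phi_{k\,m+1}$) generates terms such as $x_a\phi_{k'm'}$ with strictly increasing indices, terminating in an inhomogeneous term proportional to $b_{\ell}x_a$ via $\tC\phi_{\ell+\hf\,\ell+\hf}=b_{\ell}$, and these are functionally independent of the admitted arguments of $F$, so their coefficients must vanish one by one. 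Either carry out that short induction or at least state it; as written, your proof establishes only $\sum_n\partial F/\partial\alpha_n=\sum_n\partial F/\partial\beta_n=0$, which is weaker than the lemma.
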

\begin{proof}[Proof of Lemma \ref{Lem:ab-indep}] 
 $\tC$ has the term $ U_{03} \del{U_{02}} $ and this is the only term having $ U_{03}. $ 
On the other hand  $ F$ is independent of $ U_{03} $ so that we have the condition $ \del{U_{02}} F = 0. $ 
This means that $F$ is independent of $ U_{02}, $ i.e., independent of $ \beta_n. $  
$ \tC $ also has the term $ U_{02} \del{U_{01}} $ and this is the only term having $ U_{02}. $ 
Thus by the same argument $ F$ is not able to depend on $ U_{01}, $ i.e., $ \alpha_n. $ 
\end{proof}

 Now we turn to the variables $ w $ and $ \phi_{km}. $ 
It is immediate to  see that  $ w $ is  an invariant of $\tC, $ however, $\phi_{km}$'s are not:
\begin{eqnarray}
  \tC w &=& 0, \nonumber \\
  \tC \phi_{km} &=&\lambda_k \phi_{k+1\,m} + \lambda_m \phi_{k\,m+1}, 
  \qquad \tC \phi_{k\,\ell+\hf} = \lambda_k \phi_{k+1\,\ell+\hf}, 
  \quad 1 \leq k, m \leq \ell-\hf
  \nonumber \\
  \tC \phi_{\ell+\hf\,\ell+\hf} &=&  b_{\ell}.  \label{tConphi}
\end{eqnarray}
Thus the generator $\tC $ has the simpler form in terms of $ \phi_{km} $ 
(we omit $ \del{U_{0k}}$):
\begin{equation} 
  \tC = \sum_{k=1}^{\ell-\hf} \sum_{m=k}^{\ell-\hf} (\lambda_k \phi_{k+1\,m} + \lambda_m \phi_{k\,m+1}) \del{\phi_{km}}
  + \sum_{k=1}^{\ell-\hf} \lambda_k \phi_{k+1\,\ell+\hf} \,\del{\phi_{k\,\ell+\hf}}
  + b_{\ell}\, \del{\phi_{\ell+\hf\,\ell+\hf}}.
  \label{CExt3}
\end{equation}
The characteristic equation of the symmetry condition $ \tC F = 0 $ is a system of the first order PDEs given by
\begin{eqnarray}
 & & \frac{ d\phi_{km} }{\lambda_k \phi_{k+1\,m} + \lambda_m \phi_{k\,m+1}} = \frac{ d\phi_{\ell+\hf\,\ell+\hf} }{b_{\ell}},
     \qquad 1 \leq k \leq m \leq \ell-\hf 
     \label{CharEqC1}
     \\[3pt]
 & & \frac{ d\phi_{k\,\ell+\hf} }{ \lambda_k \phi_{k+1\,\ell+\hf} } = \frac{  d\phi_{\ell+\hf\,\ell+\hf} }{b_{\ell}},
     \qquad 1 \leq k \leq \ell-\hf
     \label{CharEqC2}
\end{eqnarray}
One may solve it recursively by starting with (\ref{CharEqC2}) for $ k = \ell-\hf:$
\[
   \frac{ d\phi_{\ell-\hf\,\ell+\hf} }{d \phi} = \frac{\lambda_{\ell-\hf}}{b_{\ell}} \phi, 
   \qquad 
   \phi = \phi_{\ell+\hf\,\ell+\hf}. 
\]
This gives the invariant of $ \tC: $
\begin{equation}
  w_{\ell-\hf\,\ell+\hf} = \phi_{\ell-\hf\,\ell+\hf} - \frac{ \lambda_{\ell-\hf} }{ b_{\ell} } \frac{\phi^2}{2}.
  \label{w-first}
\end{equation}
Next we rewrite the equation (\ref{CharEqC2}) for $ k = \ell-\th $ in the following way:
\[
   \frac{ d\phi_{\ell-\th\,\ell+\hf} }{d \phi} = \frac{ \lambda_{\ell-\th} }{ b_{\ell} } \phi_{\ell-\hf\,\ell+\hf}
   = \frac{ \lambda_{\ell-\th} }{ b_{\ell} } 
   \left(
     w_{\ell-\hf\,\ell+\hf} + \frac{ \lambda_{\ell-\hf} }{ b_{\ell} } \frac{\phi^2}{2}
   \right).
\]
Then we find an another invariant:
\begin{equation}
  w_{\ell-\th\,\ell+\hf} = \phi_{\ell-\th\,\ell+\hf} 
  - \frac{ \lambda_{\ell-\th} }{ b_{\ell} } w_{\ell-\hf\,\ell+\hf}\, \phi
  - \frac{ \lambda_{\ell-\th} \lambda_{\ell-\hf} }{ b_{\ell}^2 } \frac{\phi^3}{3!}.
  \label{w-second}
\end{equation}
The complete list of invariants of $ \tC $ is given as follows:
\begin{Lemma} \label{Lem:wkm}
 Solutions of the  equations (\ref{CharEqC1}) and (\ref{CharEqC2}) are given by 
\begin{eqnarray}
 && 
  w_{km} = \phi_{km} - \sum_{a+b \geq 1} c_{ab}(k,m) w_{k+a\,m+b} \frac{ \phi_{\ell+\hf\,\ell+\hf}^{a+b} }{(a+b)!}
  -\gamma(k,m) \frac{ \phi_{\ell+\hf\,\ell+\hf}^{2\ell+2-k-m} }{(2\ell+2-k-m)!},
  \label{Cinvariants-km}
 \\ 
 &&  
  \hspace{7.5cm} 1 \leq k \leq \ell-\hf, \ k \leq m \leq \ell+\hf \nonumber
\end{eqnarray}
where $ a, b $ run over nonnegative integers such that  $ a \leq \ell-\hf-k, \ b \leq \ell + \hf -m $ 
and $ k + a \leq m+b. $ The coefficient $ \gamma(k,m) $ depends on $ c_{ab}(k,m) $ with the maximal value of $a $ and $b:$
\begin{equation}
  \gamma(k,m) = 
    \left\{
      \begin{array}{lcl}
       \displaystyle \frac{ \lambda_{\ell-\hf} }{b_{\ell}}, & & (k,m) = (\ell-\hf,\ell+\hf) \\[15pt]
       \displaystyle \frac{ \lambda_{\ell-\hf} }{b_{\ell}}\, c_{\max(a)\,\max(b)}(k,m), & & \text{otherwise}
      \end{array}
     \right.
  \label{gamma-values}
\end{equation} 
The coefficients $ c_{ab}(k,m) $ are calculated by the algorithm given below. 
\end{Lemma}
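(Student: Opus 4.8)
The plan is to reduce the statement to the single assertion $\tC w_{km}=0$ for every admissible pair $(k,m)$: by the method of characteristics, the functions $w_{km}$ solve the system (\ref{CharEqC1})--(\ref{CharEqC2}) precisely when they are annihilated by the vector field $\tC$ of (\ref{CExt3}). I would organize the verification as an induction on the \emph{level} $d=2\ell+2-k-m$, which measures the distance of $(k,m)$ from the corner $(\ell+\hf,\ell+\hf)$. The crucial structural fact, read off from (\ref{tConphi}), is that $\tC$ raises indices: it sends a variable of level $d$ to variables of level $d-1$, it acts on the ``clock'' variable $\phi:=\phi_{\ell+\hf\,\ell+\hf}$ by the constant $\tC\phi=b_{\ell}$, and the correction terms in (\ref{Cinvariants-km}) involve only $w_{k+a\,m+b}$ with $a+b\ge 1$, hence of strictly smaller level. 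Thus every $w_{km}$ is built from invariants already constructed, and the induction is well founded. The base case is the unique level-$2$ invariant $w_{\ell-\hf\,\ell+\hf}$ of (\ref{w-first}), whose invariance is the direct computation exhibited before the lemma.

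For the inductive step I would apply $\tC$ to (\ref{Cinvariants-km}). Using $\tC w_{k+a\,m+b}=0$ (induction) and $\tC(\phi^{j}/j!)=b_{\ell}\,\phi^{\,j-1}/(j-1)!$, the derivative collapses to
\begin{equation*}
  \tC w_{km}=\lambda_k\phi_{k+1\,m}+\lambda_m\phi_{k\,m+1}
   -b_{\ell}\!\!\sum_{a+b\ge 1}\! c_{ab}(k,m)\,w_{k+a\,m+b}\,\frac{\phi^{\,a+b-1}}{(a+b-1)!}
   -b_{\ell}\,\gamma(k,m)\,\frac{\phi^{\,d-1}}{(d-1)!},
\end{equation*}
with the convention that the term $\lambda_m\phi_{k\,m+1}$ is dropped when $m=\ell+\hf$ (this is the difference between (\ref{CharEqC1}) and (\ref{CharEqC2})). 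I would then substitute for $\phi_{k+1\,m}$ and $\phi_{k\,m+1}$ their expansions obtained from (\ref{Cinvariants-km}) at level $d-1$, turning the right-hand side into a polynomial in $\phi$ whose coefficients are linear in the $w$'s, together with a single purely numerical $\phi^{d-1}$ term coming from the three $\gamma$ contributions. Demanding that each monomial $w_{k+a\,m+b}\,\phi^{\,a+b-1}$ cancel yields exactly the two-index recursion
\begin{equation*}
  b_{\ell}\,c_{ab}(k,m)=\lambda_k\,c_{a-1,b}(k+1,m)+\lambda_m\,c_{a,b-1}(k,m+1),
  \qquad c_{00}\equiv 1,
\end{equation*}
where any coefficient carrying a negative index, or an index $m+1>\ell+\hf$, is set to zero. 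This recursion \emph{is} the algorithm referred to in the statement, and conversely, once $c_{ab}$ is defined by it, the cancellation is forced term by term, so $\tC w_{km}=0$.

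It remains to pin down $\gamma$ and to address completeness. Cancellation of the leftover pure power $\phi^{d-1}$ gives $b_{\ell}\gamma(k,m)=\lambda_k\gamma(k+1,m)+\lambda_m\gamma(k,m+1)$; since the top coefficient $C(k,m):=c_{\max(a)\,\max(b)}(k,m)$ satisfies the identical recursion (because $\max(a)$ depends only on $k$ and $\max(b)$ only on $m$, so the shifts land on the top coefficients of $(k+1,m)$ and $(k,m+1)$), and both agree at the base $(\ell-\hf,\ell+\hf)$, uniqueness yields $\gamma(k,m)=\tfrac{\lambda_{\ell-\hf}}{b_{\ell}}c_{\max(a)\,\max(b)}(k,m)$, which is (\ref{gamma-values}). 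For completeness, counting admissible pairs gives $\tfrac12(\ell+\hf)(\ell+\th)-1$ functions $w_{km}$, one fewer than the number of variables $\phi_{km}$ with $1\le k\le m\le\ell+\hf$; as a nonvanishing vector field in $N$ variables possesses exactly $N-1$ functionally independent invariants, this is the full set, the missing coordinate being the clock $\phi_{\ell+\hf\,\ell+\hf}$. Independence is immediate from the triangular shape of (\ref{Cinvariants-km}): since $w_{km}=\phi_{km}$ plus a function of the $\phi_{k'm'}$ with $k'+m'>k+m$ only, the Jacobian is unitriangular for the ordering by increasing $k+m$.

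I expect the main obstacle to be purely organizational rather than conceptual: keeping the double-index recursion and the two boundary regimes straight. Concretely, one must handle (i) the top row $m=\ell+\hf$, where the single-term action (\ref{CharEqC2}) removes the $\lambda_m$ branch and truncates the sum to $b=0$, and (ii) the bookkeeping of which shifted coefficients $c_{a-1,b}(k+1,m)$ and $c_{a,b-1}(k,m+1)$ are legitimate (nonnegative indices, $k+a\le m+b$), so that the coefficient matching is a genuine identity rather than an over- or under-determined system. Once the recursion is posited, no estimate or nonconstructive argument is needed; the verification is a finite, fully explicit cancellation, as already illustrated by (\ref{w-first}) and (\ref{w-second}).
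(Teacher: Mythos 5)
Your proposal is correct and follows essentially the same route as the paper's proof: an induction on the height/level $2\ell+2-k-m$ with base case (\ref{w-first}), where the inductive step hinges on the coefficient recursion $b_{\ell}\,c_{ab}(k,m)=\lambda_k c_{a-1\,b}(k+1,m)+\lambda_m c_{a\,b-1}(k,m+1)$ (the paper's (\ref{c-gamma-rels})) --- the only difference being that the paper reads this recursion off the tree algorithm and then checks cancellation, while you derive it from the cancellation requirement and identify it with the algorithm afterwards. The completeness and functional-independence remarks at the end go beyond what the lemma asserts, but the core verification that $\tC w_{km}=0$ is the same argument.
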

\noindent
\textbf{Algorithm.} 
We borrow the terminology of graph theory.
\begin{enumerate}
\renewcommand{\labelenumi}{(\arabic{enumi})}
\item For a given $ w_{km},$ draw a rooted tree  according to the branching rules given in Figure \ref{Fig:one}. 
Each vetex and each edge of this tree are labelled. The root is labelled by $ w_{km}.$  
Other vertices and edges are labelled as indicaed in Figure \ref{Fig:one}. 
Each vertex has at most two children according to its label.  
The vertex has no children if its label is  $ w_{\ell-\hf\,\ell+\hf}. $ Thus the hight of the tree is $ 2\ell-k-m. $  
An example for $ \ell = \frac{7}{2} $ is indicated in Figure \ref{Fig:two}. 
\item Take a directed path from the root to one of the verticies with label $ w_{k+a,m+b} $ and multiply all the edge labels on this path. For instance, take the path $ (w_{13}, w_{14}, w_{24} ) $ in Figure \ref{Fig:two}. 
Then the multiplication of the labels is $ \lambda_1 \lambda_3 b_{7/2}^{-2}. $
\item If there exit other vertices whose label is also $ w_{k+a,m+b} $ (same label as (2)), 
then  repeat the same computation as (2) for  the direct paths to such vertices. In Figure \ref{Fig:two} there is one more vertex whose label is $ w_{24} $ and the path is $ (w_{13},w_{23}, w_{24}). $  
 We have $ \lambda_1 \lambda_3 b_{7/2}^{-2} $ for this path, too.
\item  Take summation of all such multiplication for the paths to the vertices whose label is $ w_{k+a\,m+b}, $  then this summation gives the coefficient $ c_{ab}(k,m).$ 
For the tree in Figure \ref{Fig:two} the coefficient of $ w_{24} $ is obtained by adding the quantities calculated in (2) and (3):  
$ c_{11}(1,3) = 2 \lambda_1 \lambda_3 b_{7/2}^{-2}. $
\end{enumerate}
\begin{figure}[h]
  \begin{center}
    \includegraphics[width=110mm]{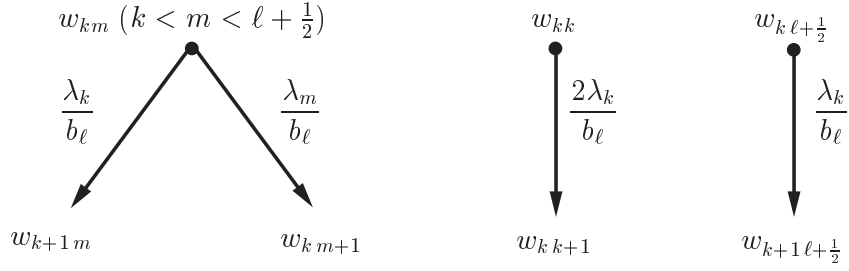}
  \end{center}
  \caption{Vertices and edges}
  \label{Fig:one}
\end{figure}
\begin{figure}[h]
  \begin{center}
    \includegraphics[keepaspectratio=true,height=95mm]{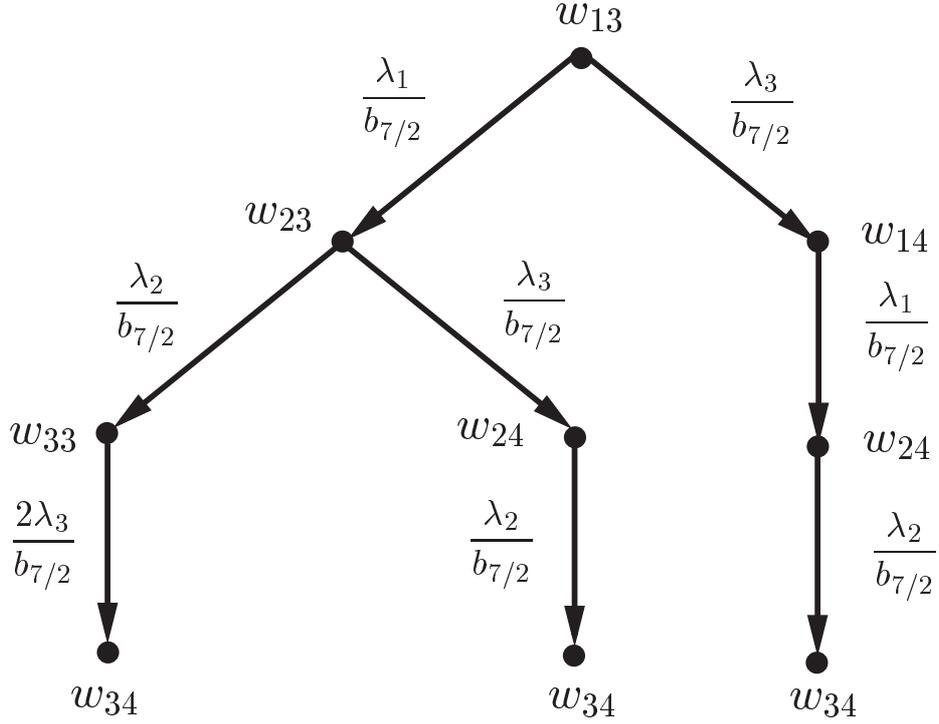}
  \end{center}
  \caption{Example of rooted tree: $ \ell=\frac{7}{2}$}
  \label{Fig:two}
\end{figure}


\begin{proof}[Proof of Lemma \ref{Lem:wkm}] 
The lemma is proved by induction on height of the trees. 
We have a tree of height zero only when label of the root is $ w_{\ell-\hf\,\ell+\hf}. $  
In this case no $ c_{ab} $ appears so that (\ref{Cinvariants-km}) yields
\[
  w_{\ell-\hf\,\ell+\hf} = \phi_{\ell-\hf\,\ell+\hf} - \gamma\Bigl(\ell-\hf,\ell+\hf \Bigr) \frac{\phi^2_{\ell+\hf\,\ell+\hf}}{2}.
\]
This coincide with (\ref{w-first}). 
To verify the legitimacy of the algorithm calculating $ c_{ab}(k,m) $ we need to start with a tree of height one. 
There are two possible labels of the root to obtain a tree of height one. 
They are $ w_{\ell-\th\,\ell+\hf} $ and $ w_{\ell-\hf\,\ell-\hf}. $
Let us start with  the label  $ w_{\ell-\th\,\ell+\hf}. $ 
It is not difficult to verify, by employing the algorithm,  that we obtain the equation (\ref{w-second}) for this case. 
For the  label $ w_{\ell-\hf\,\ell-\hf}, $ the algorithm gives the following result:
\begin{equation}
 w_{\ell-\hf\,\ell-\hf}  = \phi_{\ell-\hf\,\ell-\hf} - \frac{ 2\lambda_{\ell-\hf} }{b_{\ell}} w_{\ell-\hf\,\ell+\hf} \phi_{\ell+\hf\,\ell+\hf}
 - 2\left(  \frac{ 2\lambda_{\ell-\hf} }{b_{\ell}} \right)^2 \frac{\phi^3_{\ell+\hf\,\ell+\hf}}{3!}.
 \label{w-third}
\end{equation}
It is easy to see that $ \tC $ annihilates (\ref{w-third}). Thus the lemma is  true for trees of height one.  

 Now we consider trees of height $ h > 1.$  
If  label of the root is $ w_{km} \ (k < m < \ell+\hf), $ then the tree has two rooted subtrees (height $ h-1$) 
such that one of then has the root whose label is $ w_{k+1\,m} $ and another has the root whose label is $ w_{k\,m+1}. $  
On the other hand, if label of the root is  $ w_{kk} $ or $ w_{k\,\ell+\hf}, $ then the tree has only one rooted subtree
 (height $ h-1$) such that the subtree has the root whose label  is  $ w_{k\,k+1} $ or $ w_{k+1\,\ell+\hf}. $ 
By the algorithm one may find relations between the coefficients $ c_{ab}, \ \gamma $ for the tree of height $h$ and  
the subtrees of height $ h-1:$
\begin{eqnarray}
  c_{ab}(k,m) &=& \frac{\lambda_k}{b_{\ell}} c_{a-1\,b}(k+1,m) + \frac{\lambda_m}{b_{\ell}} c_{a\,b-1}(k,m+1),
  \nonumber \\
  \gamma(k,m) &=& \frac{\lambda_k}{b_{\ell}} \gamma(k+1,m) + \frac{\lambda_m}{b_{\ell}} \gamma(k,m+1),
  \nonumber \\
  c_{ab}(k,k) &=& \frac{2\lambda_k}{b_{\ell}} c_{a\,b-1}(k,k+1), \qquad 
  \gamma(k,k) = \frac{2\lambda_k}{b_{\ell}} \gamma(k,k+1).
  \label{c-gamma-rels}
\end{eqnarray}
We understand that $ c_{ab} $ and $ \gamma $ are zero if their indices or arguments have a impossible value. 

  Assumption of the induction is that the lemma is true for any rooted subtrees whose height is smaller than $h.$ 
Namely, we assume that $ \tC w_{k+a\,m+b} = 0 $ for  $ a+b \geq 1 $  
and what we need to show is that $ \tC w_{km} = 0. $ 
We separate out $ a+b = 1 $ terms  from the summation in  (\ref{Cinvariants-km}) and use (\ref{tConphi}) to calculate 
the action of $ \tC $ on $ w_{km}. $ For $ k < m < \ell+\hf$ we have
\begin{eqnarray*}
  \tC w_{km} &=& \tC \phi_{km} - \lambda_k w_{k+1\,m} - \lambda_m w_{k\,m+1} 
  \\
  &-& b_{\ell} \sum_{a+b\geq 2} c_{ab}(k,m) w_{k+a\,m+b} \frac{ \phi_{\ell+\hf\,\ell+\hf}^{a+b-1} }{(a+b-1)!}
  - b_{\ell} \gamma(k,m) \frac{ \phi_{\ell+\hf\,\ell+\hf}^{2\ell+1-k-m} }{(2\ell+1-k-m)!}
  \\
  &=& \tC \phi_{km} - \lambda_k w_{k+1\,m} - \lambda_m w_{k\,m+1} 
  \\
  &-& \sum_{a+b \geq 1} \bigl( \lambda_k c_{ab}(k+1,m) w_{k+1+a\,m+b} + \lambda_m c_{ab}(k,m+1) w_{k\,m+1} \bigr) 
  \frac{ \phi_{\ell+\hf\,\ell+\hf}^{a+b} }{(a+b)!}
  \\
  &-& \bigl( \lambda_k \gamma(k+1,m) + \lambda_m \gamma(k,m+1) \bigr) 
  \frac{ \phi_{\ell+\hf\,\ell+\hf}^{2\ell+1-k-m} }{(2\ell+1-k-m)!}
\end{eqnarray*}
The second equality is due to the relations (\ref{c-gamma-rels}) and the replacement $ a-1 $ (resp. $b-1$) with $a $ (resp. $b$). 
By the assumption of the induction one may use (\ref{Cinvariants-km}) to obtain:
\[
  \tC w_{km} = \tC \phi_{km} - (\lambda_k \phi_{k+1\,m} + \lambda_m \phi_{k\,m+1}) = 0.
\]
The second equality is due to (\ref{tConphi}). 

  The proof of $ \tC w_{kk} = \tC w_{k\,\ell+\hf} = 0 $ is done in a similar way. This completes the proof of Lemma \ref{Lem:wkm}.
\end{proof}

\begin{Corollary} \label{Cor:w}
 The variables $ w_{k\,\ell+\hf} \ (1 \leq k \leq \ell-\hf)$ are easily calculated by this method.
 \[
     w_{k\,\ell+\hf}  = \phi_{k\,\ell+\hf} - \sum_{n=1}^{\ell-\hf-k} 
     \begin{pmatrix}  2\ell+1-k \\ n \end{pmatrix} w_{k+n\,\ell+\hf} 
     \left( \frac{  \phi_{\ell+\hf\,\ell+\hf}  }{ b_{\ell} } \right)^n
 - \begin{pmatrix}  2\ell+1-k \\ \ell+\hf \end{pmatrix} 
   \frac{1}{ b_{\ell}^{\ell+\hf-k} } 
   \frac{  \phi_{\ell+\hf\,\ell+\hf}^{\ell+\th-k}}{\ell+\th-k}.
 \]
\end{Corollary}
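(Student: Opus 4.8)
The plan is to specialize the master formula (\ref{Cinvariants-km}) of Lemma \ref{Lem:wkm}, together with the tree algorithm, to the maximal second index $m = \ell+\hf$ and simply read off the coefficients. The crucial observation is that at $m = \ell+\hf$ the branching rule degenerates: a vertex labelled $w_{j\,\ell+\hf}$ has \emph{only} the single child $w_{j+1\,\ell+\hf}$, because its other potential child $w_{j\,\ell+\th}$ carries an out-of-range index and is discarded. This is already visible in the relation $\tC \phi_{k\,\ell+\hf} = \lambda_k \phi_{k+1\,\ell+\hf}$ of (\ref{tConphi}), which has no second term. Hence the rooted tree for $w_{k\,\ell+\hf}$ is a single path $w_{k\,\ell+\hf} \to w_{k+1\,\ell+\hf} \to \cdots \to w_{\ell-\hf\,\ell+\hf}$ with no branching; every contributing vertex therefore has $b = 0$ and $a = n$, and — in contrast to the diagonal case $w_{kk}$ — no factor-of-two doubling occurs.

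First I would compute $c_{n0}(k,\ell+\hf)$. Since there is a unique path from the root to $w_{k+n\,\ell+\hf}$, step (4) of the algorithm returns $c_{n0}(k,\ell+\hf)$ as the single product of edge labels along that path, namely $b_{\ell}^{-n}\prod_{j=0}^{n-1}\lambda_{k+j}$. Using $\lambda_j = 2\ell+1-j$ this telescopes to $b_{\ell}^{-n}(2\ell+1-k)!/(2\ell+1-k-n)!$, so the term $c_{n0}(k,\ell+\hf)\,\phi_{\ell+\hf\,\ell+\hf}^{n}/n!$ of (\ref{Cinvariants-km}) collapses to $\binom{2\ell+1-k}{n}\bigl(\phi_{\ell+\hf\,\ell+\hf}/b_{\ell}\bigr)^{n}$, matching the first sum of the claimed formula (with range $1 \le n \le \ell-\hf-k$, as forced by the constraint $a \le \ell-\hf-k$ in (\ref{Cinvariants-km})). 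Equivalently, one obtains the same result from the recursion $c_{n0}(k,\ell+\hf) = (\lambda_k/b_{\ell})\,c_{n-1\,0}(k+1,\ell+\hf)$ of (\ref{c-gamma-rels}) by a one-line induction on $n$.

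Next I would treat the $\gamma$-term. By the second relation of (\ref{c-gamma-rels}), specialized to $m = \ell+\hf$ (again with no contribution from the invalid $m+1$ direction), one has $\gamma(k,\ell+\hf) = (\lambda_k/b_{\ell})\,\gamma(k+1,\ell+\hf)$. Iterating down to the base case $\gamma(\ell-\hf,\ell+\hf) = \lambda_{\ell-\hf}/b_{\ell}$ from (\ref{gamma-values}) yields $\gamma(k,\ell+\hf) = b_{\ell}^{-(\ell+\hf-k)}\,(2\ell+1-k)!/(\ell+\hf)!$. Substituting this into the terminal term $\gamma(k,\ell+\hf)\,\phi_{\ell+\hf\,\ell+\hf}^{\,\ell+\th-k}/(\ell+\th-k)!$ of (\ref{Cinvariants-km}) and rewriting $(\ell+\th-k)! = (\ell+\th-k)(\ell+\hf-k)!$ reproduces exactly the $\binom{2\ell+1-k}{\ell+\hf}\,b_{\ell}^{-(\ell+\hf-k)}\,\phi_{\ell+\hf\,\ell+\hf}^{\,\ell+\th-k}/(\ell+\th-k)$ appearing in the statement.

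The argument is routine once the single-path structure is recognised; the genuine content is precisely the \emph{absence} of branching at $m = \ell+\hf$. I expect the only (modest) obstacle to be careful bookkeeping: confirming that the factor-of-two doubling of the diagonal recursion $c_{ab}(k,k)$ does not spuriously intrude here, and applying the two factorial identities — the telescoping of $\prod_{j}\lambda_{k+j}$ into a ratio of factorials, and the splitting $(\ell+\th-k)! = (\ell+\th-k)(\ell+\hf-k)!$ — with the correct index ranges.
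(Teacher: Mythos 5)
Your proposal is correct and follows essentially the same route as the paper: specialize the tree of Lemma \ref{Lem:wkm} to the branch-free single path at $m=\ell+\hf$, telescope the product of edge labels $\lambda_k\cdots\lambda_{k+n-1}$ into $\binom{2\ell+1-k}{n}n!\,b_{\ell}^{-n}$, and obtain $\gamma(k,\ell+\hf)$ from the terminal coefficient (the paper reads it off (\ref{gamma-values}) as $\lambda_{\ell-\hf}b_{\ell}^{-1}c_{\ell-\hf-k\,0}(k,\ell+\hf)$, which is exactly the closed form of the recursion you iterate). The only difference is cosmetic, and your bookkeeping of the factorial identities and of the absence of the diagonal factor of two is accurate.
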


\begin{proof}[Proof of Corollary \ref{Cor:w}]
  The rooted tree used for this computation is indicated in Figure \ref{fig: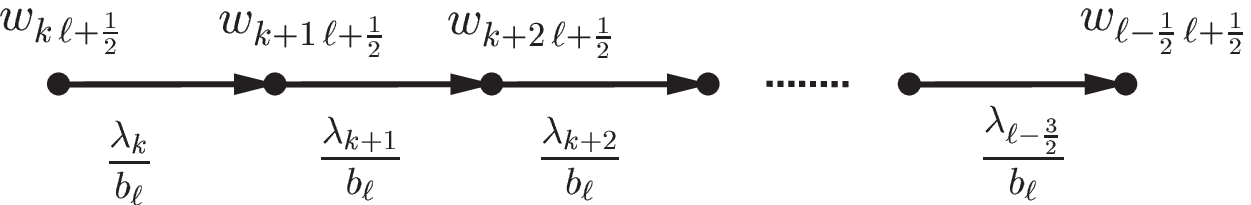}. 
It follows that the coefficient of $ w_{k+a\,\ell+\hf} $ is given by
\[
  c_{a0}(k,{\scriptstyle \ell+\hf}) = \frac{ \lambda_k \lambda_{k+1} \cdots \lambda_{k+a-1}  }{ b_{\ell}^a  }
  = \begin{pmatrix}
       2\ell + 1 - k \\ a
  \end{pmatrix} 
  \frac{a!}{ b_{\ell}^a }.
\]
By (\ref{gamma-values}) the  coefficient $ \gamma(k,\ell+\hf) $ is calculated as 
$  \gamma(k,\ell+\hf) = \lambda_{\ell-\hf} b_{\ell}^{-1} c_{\ell-\hf-k\,0}(k,\ell+\hf). $ 
Thus we obtain  the expression of $ w_{k\,\ell+\hf} $ given in the corollary.
\end{proof}
\begin{figure}[h]
  \begin{center}
    \includegraphics[keepaspectratio=true,height=17mm]{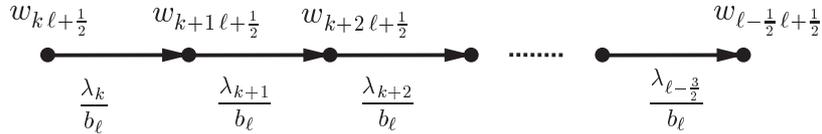}
  \end{center}
  \caption{Rooted tree for the computation of $w_{k\,\ell+\hf}$}  \label{fig:figure3.eps}
\end{figure}

 Our final task is to consider the invariance under $D.$ It is immediate to see that $ \hD $ scales $ w_{km} $ as 
\[
  w_{km} \ \to \ e^{-2(2\ell+2-k-m) \epsilon} \,w_{km}.
\]
Together with the scaling law (\ref{Dscale2}) we arrive at the final theorem. 
\begin{Theorem} \label{Thm:Main}
The PDE invariant under the group generated by $ \g $ with $ \ell \geq \frac{5}{2} $ is given by
 \begin{equation}
    F\left( \frac{w_{km}}{ w^{2\ell+2-k-m} } \right) =0, \quad 1 \leq k \leq \ell-\hf, \ k \leq m \leq \ell+\hf
    \label{InvEqThm2}      
 \end{equation}
 where  $ F$ is an arbitrary differentiable function. The variables $ w $ and $ w_{km}$  are given in (\ref{w-def}) and 
(\ref{Cinvariants-km}), respectively.  
This is the PDE with $ \ell+\th $ independent and one dependent variables. The function $F$ has 
$ \hf \Big( \ell-\hf \Big) \Big( \ell+\frac{5}{2}\Big) $ arguments. 
\end{Theorem}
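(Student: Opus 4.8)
The plan is to assemble the final classification of $\g$-invariant PDEs by combining three ingredients already established in the excerpt: Theorem \ref{Thm:HMPall} (invariance under all of $\h$), Lemma \ref{Lem:ab-indep} (the $C$-symmetry forces independence from $\alpha_n,\beta_n$), and Lemma \ref{Lem:wkm} (the complete list of $\tC$-invariants built from the $\phi_{km}$). First I would start from the $\h$-invariant form (\ref{InvEqForm7}), where $F$ is a function of $w$, $\alpha_n$, $\beta_n$ and $\phi_{km}$. Invoking Lemma \ref{Lem:ab-indep}, I discard the $\alpha_n$ and $\beta_n$ dependence immediately, so that a $C$-invariant PDE may only depend on $w$ and the $\phi_{km}$. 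Thus the remaining problem is to solve $\tC F(w,\phi_{km})=0$ and then impose the $D$-scaling.

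Next I would use the decomposition $\hat C = -\tfrac{b_\ell}{2}x_{\ell+\hf}^2\hM + t\hD + 2\ell x_1\tP{2} - \tC$ from (\ref{CExt2}). Since $\hM$, $\tP{2}$ are already accounted for and $\tC$ is $t$-independent, the $C$-symmetry on functions of $w,\phi_{km}$ reduces to $\tC F=0$ with $\tC$ in the simplified form (\ref{CExt3}). Because $\tC w=0$ by (\ref{tConphi}), $w$ survives as an invariant, and the genuine work is to find the full set of functionally independent solutions of $\tC F=0$ among the $\phi_{km}$. This is exactly what Lemma \ref{Lem:wkm} supplies: the quantities $w_{km}$ defined in (\ref{Cinvariants-km}), for $1\le k\le\ell-\hf$ and $k\le m\le\ell+\hf$, are a complete list of $\tC$-invariants. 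Hence any $\tC$-invariant function of $w,\phi_{km}$ can be written as $F(w,w_{km})$.

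Finally I would impose invariance under $D$. From the prolongation (\ref{DExt}) one reads off the scaling weights: $w\to e^{-2\epsilon}w$ from (\ref{Dscale2}), and $w_{km}\to e^{-2(2\ell+2-k-m)\epsilon}w_{km}$, which follows because each $\phi_{km}$ carries weight $-2(2\ell+2-k-m)$ and the $\tC$-invariant combination (\ref{Cinvariants-km}) is homogeneous of that same weight (the corrections involve products $w_{k+a\,m+b}\,\phi_{\ell+\hf\,\ell+\hf}^{a+b}$, whose weights $-2(2\ell+2-k-a-m-b)-2(a+b)=-2(2\ell+2-k-m)$ match, and similarly for the $\gamma$ term). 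The $D$-invariant combinations are therefore the scale-invariant ratios $w_{km}/w^{2\ell+2-k-m}$, giving the PDE (\ref{InvEqThm2}). For the parameter count, I would simply enumerate the admissible pairs $(k,m)$: for each $k\in\{1,\dots,\ell-\hf\}$ there are $\ell+\hf-k+1$ choices of $m$, and summing this arithmetic progression yields $\tfrac12(\ell-\hf)(\ell+\tfrac52)$ arguments.

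The main obstacle in this argument is the homogeneity check for $w_{km}$ under $D$: since the $w_{km}$ are defined recursively through Lemma \ref{Lem:wkm} with the nontrivial coefficients $c_{ab}(k,m)$ and $\gamma(k,m)$, one must confirm that every term in (\ref{Cinvariants-km}) carries the \emph{same} $D$-weight $-2(2\ell+2-k-m)$, so that the ratio $w_{km}/w^{2\ell+2-k-m}$ is genuinely scale-invariant. Everything else is a direct concatenation of previously proved results; the only real content beyond citation is verifying this weight-matching, which is a routine but necessary homogeneity bookkeeping.
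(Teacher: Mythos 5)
Your proposal is correct and follows essentially the same route as the paper: combine Theorem \ref{Thm:HMPall} with Lemma \ref{Lem:ab-indep} to drop $\alpha_n,\beta_n$, invoke Lemma \ref{Lem:wkm} for the complete set of $\tC$-invariants $w_{km}$, and then form the scale-invariant ratios $w_{km}/w^{2\ell+2-k-m}$ using the $D$-weights. The only difference is that the paper simply asserts the scaling $w_{km}\to e^{-2(2\ell+2-k-m)\epsilon}w_{km}$ as immediate, whereas you spell out the (correct) homogeneity bookkeeping for the terms in (\ref{Cinvariants-km}); your argument count also matches the paper's.
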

\begin{Example} 
Invariant PDE for $ \ell = \frac{5}{2}.$ 
\[
   F\left( \frac{w_{11}}{w^5}, \frac{w_{12}}{w^4}, \frac{w_{13}}{w^3}, \frac{w_{22}}{w^3}, \frac{w_{23}}{w^2} \right) = 0 
\]
where
\begin{eqnarray*}
  w &=&  \frac{U_0}{U} + x_2 \frac{U_1}{U} + 2 x_3 \frac{U_2}{U} - \frac{U_3^2}{8U^2},
  \\
  w_{23} &=& \phi_{23} - \frac{1}{18} \phi_{33}^2,
  \\
  w_{22} &=& \phi_{22} - \frac{2}{9} \phi_{23} \phi_{33} + \frac{2}{3^5} \phi_{33}^3,
  \\
  w_{13} &=& \phi_{13}- \frac{5}{36} \phi_{23} \phi_{33} + \frac{5}{2^23^5} \phi_{33}^3,
  \\
  w_{12} &=& \phi_{12} - \frac{1}{9} \phi_{13} \phi_{33} - \frac{5}{36} \phi_{22} \phi_{33}  +\frac{5}{2^3 3^3} \phi_{23} \phi_{33}^2 
            - \frac{5}{2^5 3^5}  \phi_{33}^4,
  \\
 w_{11} &=& \phi_{11} - \frac{5}{18} \phi_{12} \phi_{33} +\frac{5}{2^2 3^4} \phi_{13} \phi_{33}^2  - \frac{25}{2^4 3^4} \phi_{22} \phi_{33}^2  
            - \frac{25}{2^4 3^6} \phi_{23} \phi_{33}^3   - \frac{5}{2^4 3^8}  \phi_{33}^5.
 \end{eqnarray*}
 The symmetry generators are given by
 \begin{eqnarray}
    & & M = U \del{U}, \qquad \,D = 2t \del{t} + 5 x_1 \del{x_1} + 3x_2 \del{x_2} + x_3 \del{x_3}, \qquad 
        H = \del{t},   
    \nonumber \\
    & & C = t(t\del{t} + 5 x_1 \del{x_1} + 3 x_2 \del{x_2} + x_3 \del{x_3})
          + 5 x_1 \del{x_2} + 4 x_2 \del{x_3} - 18 x_3^2 U \del{U},
    \nonumber \\
    & & P^{(1)} = \del{x_1}, \qquad P^{(2)} = t \del{x_1} + \del{x_2}, \qquad
    P^{(3)} = t^2 \del{x_1} + 2t \del{x_2} + \del{x_3},
    \nonumber \\
    & & P^{(4)} = t^3 \del{x_1} +  3t^2 \del{x_2} + 3t \del{x_3} -12 x_3 U \del{U},
    \nonumber \\
    & & P^{(5)} = t^4 \del{x_1} + 4t^3 \del{x_2} + 4t^2 \del{x_3}-24(2tx_3 + x_2) U \del{U},
    \nonumber \\
    & & P^{(6)} = t^5 \del{x_1} + 5t^4 \del{x_2} + 10t^3 \del{x_3} -120(t^2 x_3 - t x_2 + x_1) U\del{U}.\nonumber
 \end{eqnarray}
\end{Example}


\section{Concluding remarks}
\label{Sec:CR}

 We have constructed nonlinear PDEs invariant under the transformations generated by 
the realization of CGA given in (\ref{CGAgenerators}). 
This was done by obtaining the general solution of the symmetry conditions so that the PDEs constructed in this work 
are the most general ones invariant under (\ref{CGAgenerators}). 
A remarkable property of the PDEs is that they do not contain the second order derivative in $t $ if $ \ell > \frac{3}{2}. $ 
It means that there exist no invariant PDEs of wave or Klein-Gordon type for $ \ell > \th. $ 
This type of $\ell$-dependence does not appear in the linear PDEs constructed in \cite{Aizawa:2013vma, Aizawa:2014hva} based 
on the representation theory of $\g.$ 
This  will be changed if one start with a realization of CGA which is different from (\ref{CGAgenerators}). 

The CGAs considered in this work are only $d = 1$ members. 
Extending the present computation to higher values of $d$ would be an interesting future work. 
Because the $ d = 2 $ CGA has a distinct central extension so that we will have different types of invariant PDEs. 
For $ d \geq 3 $ CGAs have $ so(d)$ as a subalgebra. This will also cause a significant change in invariant PDEs.


\section*{Acknowledgments}

The authors are grateful to Prof. Y. Uno for helpful discussion. 
NA is supported by the  grants-in-aid from JSPS (Contract No.26400209).




\bibliographystyle{aip}
\bibliography{CGA}

\begin{thebibliography}{10}

\bibitem{Aizawa:2013vma}
N.~Aizawa, Y.~Kimura, and J.~Segar,
\newblock J. Phys. A:Math. Theor. {\bf 46}, 405204 (2013).

\bibitem{aizawa2002intertwining}
N.~Aizawa, V.-K. Dobrev, and H.-D. Doebner,
\newblock Intertwining operators for schr{\"o}dinger algebras and hierarchy of
  invariant equations,
\newblock in {\em Quantum Theory and Symmetries}, edited by E.~Kapu\'{s}cik and
  A.~Horzela, pages 222--227, World Scientific, 2002.

\bibitem{aizawa2008intertwining}
N.~Aizawa, V.-K. Dobrev, H.-D. Doebner, and S.~Stoimenov,
\newblock Intertwining operators for the schr{\"o}dinger algebra in n $\geq$ 3
  space dimension,
\newblock in {\em Proceedings of the VII International Workshop on gLie Theory
  and its Applications in Physics"}, edited by H.-D. Doebner and V.-K. Dobrev,
  pages 372--399, Heron Press, Sofia, 2008.

\bibitem{Aizawa:2014hva}
N.~Aizawa, R.~Chandrashekar, and J.~Segar,
\newblock SIGMA {\bf 11}, 002 (2015).

\bibitem{negro1997nonrelativistic}
J.~Negro, M.~Del~Olmo, and A.~Rodr{\i}guez-Marco,
\newblock J. Math. Phys. {\bf 38}, 3786 (1997).

\bibitem{negro1997nonrelativistic2}
J.~Negro, M.~Del~Olmo, and A.~Rodr{\i}guez-Marco,
\newblock J. Math. Phys. {\bf 38}, 3810 (1997).

\bibitem{havas1978conformal}
P.~Havas and J.~Pleba{\'n}ski,
\newblock J. Math. Phys. {\bf 19}, 482 (1978).

\bibitem{henkel1997local}
M.~Henkel,
\newblock Phys. Rev. Lett. {\bf 78}, 1940 (1997).

\bibitem{martelli2010comments}
D.~Martelli and Y.~Tachikawa,
\newblock JHEP {\bf 05}, 1 (2010).

\bibitem{lie1970theorie}
S.~Lie,
\newblock {\em Theorie der Transformationsgruppen},
\newblock American Mathematical Soc., 1970.

\bibitem{jacobi1866vorlesungen}
C.-G.-J. Jacobi and C.-W. Borchardt,
\newblock {\em Vorlesungen {\"u}ber Dynamik},
\newblock G. Reimer, 1866.

\bibitem{Niederer:1972zz}
U.~Niederer,
\newblock Helv. Phys. Acta {\bf 45}, 802 (1972).

\bibitem{Niederer:1973tz}
U.~Niederer,
\newblock Helv. Phys. Acta {\bf 46}, 191 (1973).

\bibitem{Niederer:1974ba}
U.~Niederer,
\newblock Helv. Phys. Acta {\bf 47}, 167 (1974).

\bibitem{hagen1972scale}
C.-R. Hagen,
\newblock Phys. Rev. D {\bf 5}, 377 (1972).

\bibitem{jackiw2008introducing}
R.~Jackiw,
\newblock Phys. Today {\bf 25}, 23 (1972).

\bibitem{burdet1972many}
G.~Burdet and M.~Perrin,
\newblock Lett. Nuovo Cim. {\bf 4}, 651 (1972).

\bibitem{son2008toward}
D.-T. Son,
\newblock Phys. Rev. D {\bf 78}, 046003 (2008).

\bibitem{balasubramanian2008gravity}
K.~Balasubramanian and J.~McGreevy,
\newblock Phys. Rev. Lett. {\bf 101}, 061601 (2008).

\bibitem{unterberger2011schrodinger}
J.~Unterberger and C.~Roger,
\newblock {\em The Schr{\"o}dinger-Virasoro Algebra: Mathematical Structure and
  Dynamical Schr{\"o}dinger Symmetries},
\newblock Springer Science \& Business Media, 2011.

\bibitem{aizawa2012highest}
N.~Aizawa, P.-S. Isaac, and Y.~Kimura,
\newblock Int. J. Math. {\bf 23} (2012).

\bibitem{henkel2002phenomenology}
M.~Henkel,
\newblock Nucl. Phys. B {\bf 641}, 405 (2002).

\bibitem{aizawa2011irreducible}
N.~Aizawa and P.-S. Isaac,
\newblock J. Phys. A:Math. Theor. {\bf 44}, 035401 (2011).

\bibitem{lu2014simple}
R.~L{\"u}, V.~Mazorchuk, and K.~Zhao,
\newblock J. Pure Appl. Algebra {\bf 218}, 1885 (2014).

\bibitem{fushchich1985galilean}
W.-I. Fushchich and R.-M. Cherniha,
\newblock J. Phys. A:Math. Gen. {\bf 18}, 3491 (1985).

\bibitem{fushchich1989galilean}
W.-I. Fushchich and R.-M. Cherniha,
\newblock Ukrainian Math. J. {\bf 41}, 1161 (1989).

\bibitem{rideau1993evolution}
G.~Rideau and P.~Winternitz,
\newblock J. Math. Phys. {\bf 34}, 558 (1993).

\bibitem{cherniha2010exotic}
R.-M. Cherniha and M.~Henkel,
\newblock J. Math. Anal. Appl. {\bf 369}, 120 (2010).

\bibitem{fushchych1995galilei}
W.-I. Fushchych and R.-M. Cherniha,
\newblock J. Phys. A:Math. Gen. {\bf 28}, 5569 (1995).

\bibitem{cherniha2004non}
R.-M. Cherniha and M.~Henkel,
\newblock J. Math. Anal. Appl. {\bf 298}, 487 (2004).

\bibitem{duval2009non}
C.~Duval and P.-A. Horvathy,
\newblock J. Phys. A:Math. Theor. {\bf 42}, 465206 (2009).

\bibitem{duval2011conformal}
C.~Duval and P.-A. Horvathy,
\newblock J. Phys. A:Math. Theor. {\bf 44}, 335203 (2011).

\bibitem{gomis2012schrodinger}
J.~Gomis and K.~Kamimura,
\newblock Phys. Rev. D {\bf 85}, 045023 (2012).

\bibitem{galajinsky2011remarks}
A.~Galajinsky and I.~Masterov,
\newblock Phys. Lett. B {\bf 702}, 265 (2011).

\bibitem{galajinsky2013dynamical}
A.~Galajinsky and I.~Masterov,
\newblock Nucl. Phys. B {\bf 866}, 212 (2013).

\bibitem{galajinsky2013dynamical2}
A.~Galajinsky and I.~Masterov,
\newblock Phys. Lett. B {\bf 723}, 190 (2013).

\bibitem{andrzejewski2014conformal}
K.~Andrzejewski, A.~Galajinsky, J.~Gonera, and I.~Masterov,
\newblock Nucl. Phys. B {\bf 885}, 150 (2014).

\bibitem{galajinsky2015dynamical}
A.~Galajinsky and I.~Masterov,
\newblock Nucl. Phys. B {\bf 896}, 244 (2015).

\bibitem{andrzejewski2013dynamical}
K.~Andrzejewski, J.~Gonera, P.~Kosi{\'n}ski, and P.~Ma{\'s}lanka,
\newblock Nucl. Phys. B {\bf 876}, 309 (2013).

\bibitem{andrzejewski2013nonrelativistic}
K.~Andrzejewski, J.~Gonera, and A.~Kijanka-Dec,
\newblock Phys. Rev. D {\bf 87}, 065012 (2013).

\bibitem{andrzejewski2013dynamical2}
K.~Andrzejewski and J.~Gonera,
\newblock Phys. Lett. B {\bf 721}, 319 (2013).

\bibitem{andrzejewski2013unitary}
K.~Andrzejewski and J.~Gonera,
\newblock Phys. Rev. D {\bf 88}, 065011 (2013).

\bibitem{andrzejewski2014conformal2}
K.~Andrzejewski,
\newblock Phys. Lett. B {\bf 738}, 405 (2014).

\bibitem{andrzejewski2012nonrelativistic}
K.~Andrzejewski, J.~Gonera, and P.~Ma{\'s}lanka,
\newblock Phys. Rev. D {\bf 86}, 065009 (2012).

\bibitem{Aizawa:2014uma}
N.~Aizawa, Z.~Kuznetsova, and F.~Toppan,
\newblock J. Math. Phys. {\bf 56}, 031701 (2015).

\bibitem{olver2000applications}
P.~Olver,
\newblock {\em Applications of Lie groups to differential equations},
\newblock Springer Science \& Business Media, 2000.

\bibitem{bluman1989symmetries}
G.~W. Bluman and S.~Kumei,
\newblock {\em Symmetries and differential equations},
\newblock Springer, 1989.

\bibitem{stephani1989differential}
H.~Stephani,
\newblock {\em Differential equations: their solution using symmetries},
\newblock Cambridge University Press, 1989.

\bibitem{courant1966methods}
R.~Courant and D.~Hilbert,
\newblock {\em Methods of mathematical physics}, volume~1,
\newblock CUP Archive, 1966.

\end{thebibliography}


%


%

\end{document}